\documentclass[acmsmall, acmthm, natbib=false]{acmart}
\usepackage{microtype}
\usepackage{xspace,amsmath,amsfonts,bbm,ifthen,mathtools}
\usepackage[heavycircles]{stmaryrd}
\usepackage{bbold}
\usepackage[sort, noadjust]{cite}
\usepackage{subcaption}

\usepackage{tikz}
\usetikzlibrary{arrows,automata,shapes,calc,decorations.markings,positioning}
\usetikzlibrary{decorations.pathmorphing}
\tikzset{%
  ->, >=stealth', auto, initial text=,
  state/.style={%
    inner sep=.5mm, minimum size=3.5mm, draw=black, circle
  }
  state with output/.style={%
    shape=rectangle split, rectangle split parts=2, draw, fill=white,
    initial text=, inner sep=1mm
  }
}

\usepackage{algorithm}
\usepackage[indLines=true,noEnd=true,beginComment=//~,beginLComment=/*~, endLComment=~*/]{algpseudocodex}
\tikzset{algpxIndentLine/.style={draw=gray,very thin,-}}
\usepackage{caption}
\usepackage{xfrac}

\theoremstyle{acmdefinition}\newtheorem*{rexample}{Running example}

\newcommand*\ie{\textit{i.e.,}\xspace}
\newcommand*\eg{\textit{e.g.,}\xspace}
\newcommand*\Real{\mathbbm{R}}
\newcommand*\tto[1]{\ensuremath{\xrightarrow{#1}}}
\newcommand*\Int{\mathbbm{Z}}
\newcommand{\first}[1]{\textup{\textsf{first}}(#1)}
\newcommand{\last}[1]{\textup{\textsf{last}}(#1)}
\newcommand*\Nat{\mathbbm{N}}
\newcommand*\weights[3]{\textup{\textsf{weights}}_{#1\downarrow #2}(#3)}
\newcommand*\sweights[2]{\textup{\textsf{weights}}_{#1}(#2)}
\newcommand*\lweight[2]{\textup{\textsf{lastweight}}_{#1}(#2)}

\newcommand*\bigmid{\mathrel{\big|}}
\newcommand*\Realnn{\Real_{\ge 0}}
\renewcommand*\epsilon{\varepsilon}
\renewcommand*\phi{\varphi}
\newcommand*\sem[2][]{\llbracket #2\rrbracket^{#1}}
\newcommand*\oil{\mathopen{]}}
\newcommand*\oir{\mathclose{[}}
\newcommand*\cpa[1]{\textup{\textsf{cpa}}(#1)}
\newcommand*\cpr{\mathfrak{r}}
\newcommand*\bigland{\bigvee}

\begin{document}

\title[$\omega$-Regular Energy Problems]{\texorpdfstring{$\boldsymbol{\omega}$}{omega}-Regular Energy Problems}

\author{Sven Dziadek}
\orcid{0000-0001-6767-7751}
\affiliation{%
  \institution{Inria}
  \city{Paris}
  \country{France}
}

\author{Uli Fahrenberg}
\orcid{0000-0001-9094-7625}
\affiliation{%
  \institution{EPITA Research Laboratory (LRE)}
  \city{Paris}
  \country{France}
}

\author{Philipp Schlehuber-Caissier}
\orcid{0000-0002-6611-9659}
\affiliation{%
  \institution{EPITA Research Laboratory (LRE)}
  \city{Paris}
  \country{France}
}

\acmJournal{FAC}

\begin{abstract}
  We show how to efficiently solve problems involving a quantitative measure,
  here called energy, as well as a qualitative acceptance condition,
  expressed as a Büchi or Parity objective, in finite
  weighted automata and in one-clock weighted timed automata.
  Solving the former problem and extracting the corresponding witness
  is our main contribution and is handled by a
  modified version of the Bellman-Ford algorithm interleaved with Couvreur's
  algorithm.
  The latter problem is handled via a reduction to the
  former relying on the corner-point abstraction.  All our algorithms
  are freely available and implemented in a tool based on the
  open-source platforms TChecker and~Spot.
\end{abstract}

\keywords{weighted timed automaton, weighted automaton, energy problem,
  generalized Büchi acceptance, Parity acceptance, energy constraints}

\maketitle

\section{Introduction}

Energy problems in weighted (timed) automata pose the question whether
there exist infinite runs in which the accumulated weights always stay
positive.
Since their introduction in
\cite{DBLP:conf/formats/BouyerFLMS08}, much research has gone into
different variants of these problems, for example energy games
\cite{DBLP:conf/fsttcs/ChatterjeeDHR10,
  DBLP:conf/ictac/FahrenbergJLS11, DBLP:journals/iandc/VelnerC0HRR15},
energy parity games \cite{DBLP:journals/tcs/ChatterjeeD12}, robust
energy problems \cite{DBLP:journals/fac/BacciBFLMR21}, etc., and into their application in
embedded systems \cite{DBLP:conf/emsoft/FalkHLLP12,
  DBLP:conf/pts/FrehseLMN11}, satellite control
\cite{DBLP:conf/fm/BisgaardGHKNS16,
  DBLP:conf/isola/MikucionisLRNSPPH10}, and other areas.
Nevertheless, many basic questions remain open and implementations are
somewhat lacking.

The above results discuss \emph{looping} automata~\cite{DBLP:conf/focs/WolperVS83},
\ie $\omega$-automata in which all states are accepting.
In practice, looping automata do not suffice
because they cannot express all liveness properties.
For model checking, formal properties (\eg in LTL)
are commonly translated into (generalized) Büchi automata~\cite{buechi_infinite},
or Parity automata \cite{renkin.20.atva} if determinism is of the issue,
which provide models for the class of $\omega$-regular languages.

In this work, we extend energy problems with transition-based
generalized Büchi or Parity conditions and treat them for weighted automata as
well as weighted timed automata with precisely one clock.
On weighted automata we show that they are effectively decidable using a
combination of a modified Bellman-Ford algorithm
\cite{ford, bellman} with Couvreur's algorithm \cite{couvreur99}.
For weighted timed automata we show that one can use the corner-point
abstraction
\cite{DBLP:conf/concur/LaroussinieMS04, DBLP:conf/hybrid/BehrmannFHLPRV01}
to translate the problem to weighted (untimed) automata.

For looping automata, the above problems have been solved in \cite{DBLP:conf/formats/BouyerFLMS08}.
(This paper also treats energy games and so-called universal energy problems,
both of which are of no concern to us here.)
While we can re-use some of the methods of \cite{DBLP:conf/formats/BouyerFLMS08} for
our Büchi-enriched case, our extension is by no means trivial.
First, in the setting of \cite{DBLP:conf/formats/BouyerFLMS08} it suffices to find
\emph{any} reachable and energy positive loop;
now, our algorithm must consider that such loops might not be accepting in themselves
but give access to new parts of the automaton which are.
Secondly, \cite{DBLP:conf/formats/BouyerFLMS08} mostly treats the energy problem with unlimited upper bound,
whereas we consider that energy has a (``weak'') upper bound beyond which it cannot increase.

In \cite{DBLP:conf/formats/BouyerFLMS08} it is claimed that the weak-upper-bound problem can be solved
by slight modifications to their solution of the unbounded problem; but this is not the case.
For example, the typical Bellman-Ford detection of positive loops
might not work when the energy levels attained in the previous step
are already equal to the upper bound.
Moreover, we argue that the setting considering a weak upper bound is of greater
practical interest, as it allows to faithfully model actual physical systems
with a bounded capacity to store energy, such as electric vehicles.

As a second contribution, we have implemented all of our algorithms in a tool
based on the open-source platforms TChecker\footnote{See
  \url{https://github.com/ticktac-project/tchecker}}
\cite{DBLP:journals/iandc/HerbreteauSW16} and Spot\footnote{See
  \url{https://spot.lrde.epita.fr/}}
\cite{duret.22.cav} to solve $\omega$-regular energy problems for
one-clock weighted timed automata.
We first employ TChecker to compute the zone graph and then use this to construct the
corner-point abstraction.  This in turn is a weighted (untimed)
Büchi or Parity automaton, in which we also may apply a variant of
Alur and Dill's Zeno-exclusion
technique~\cite{DBLP:journals/tcs/AlurD94}.  Finally, our main
algorithm to solve the $\omega$-regular energy problems on weighted
finite automata is implemented using a fork of Spot.  Our software is
available at \url{https://github.com/PhilippSchlehuberCaissier/wspot}.

In our approach to solve the latter problem, we do not
and cannot fully separate the quantitative constraint on the energy
and qualitative acceptance condition (contrary to, for example,
\cite{DBLP:journals/tcs/ChatterjeeD12} which reduces energy parity games
to energy games).
We first determine the strongly connected
components (SCCs) of the unweighted automaton.
Then we treat each of the SCCs one by one depending on the acceptance condition.
In the case of a generalized Büchi accepting condition, we degeneralize it
using the standard counting construction \cite{DBLP:conf/cav/GastinO01}.
In the case of the Parity condition, we rely on an approach inspired
by a classical algorithm to solve Parity games, devised by Zielonka and
published in \cite{zielonka.98.tcs}; the approach presented in 
\cite{DBLP:journals/tcs/ChatterjeeD12} uses similar ideas (adapted to their setting).
The idea is to work in layers considering the highest priorities first:
If the highest priority is accepting, then we treat it much like an
accepting transition in the Büchi case.
If it is rejecting, we remove the corresponding transitions from the SCC and
search in the remaining graph.
Finally, we apply a modified Bellman-Ford algorithm to search for energy
feasible lassos that start on the main graph and loop on an accepting cycle
in the SCC.

This work is based on our contribution \cite{DBLP:conf/fm/DziadekFS23} and
extends it in two directions.
First we generalize the acceptance condition from Büchi to Parity, which allows
to represent the class of $\omega$-regular languages with deterministic automata.
Secondly we propose an efficient algorithm to compute the actual trace verifying
the quantitative and qualitative constraints.
This turns out to be a non-trivial task as the structure of these traces is
significantly more complicated than those for normal $\omega$-words.

We also correct an error in \cite{DBLP:conf/fm/DziadekFS23}.
There, it is claimed that two iterations suffice
to find Büchi accepting and energy feasible cycles.
We expose an example where more than two iterations are necessary
and where
the algorithm given in \cite{DBLP:conf/fm/DziadekFS23}
would fail to find energy feasible cycles.
In fact, the number of iterations necessary is linear in the weak upper bound
when using that approach.
We therefore devise a new algorithm which both corrects this mistake
and whose complexity does not depend on the weak upper bound.

The rest of the paper is structured as follows.
In Section 2 we introduce the energy Büchi problem for finite weighted automata.
We also show that these may be degeneralized and that searching for lassos is enough.
Section 3 introduces the energy Büchi problem for weighted timed automata
and shows how to reduce this to the problem for weighted (untimed) automata.
In Section 4 we finish solving the energy Büchi problem for finite weighted automata
by developing an algorithm to find feasible lassos.
Section 5 shows some benchmarks,
and in Section 6 we develop an algorithm to compute the actual trace verifying
the energy Büchi constraints in case it exists.
Section 7 shows how to reduce energy Parity problems to energy Büchi problems,
and Section 8 concludes.

\begin{rexample}
To clarify notation and put the concepts into context, we introduce a
small running example.  A satellite in low-earth orbit has a rotation
time of about 90 minutes, $40\%$ of which are spent in earth shadow.
Measuring time in minutes and
(electrical) energy in unspecified ``energy units'', we may thus model
its simplified base electrical system as shown in Figure~\ref{fig:ex.1}.

\begin{figure}[tbp]
\begin{subfigure}[t]{0.55\textwidth}
  \begin{center}
    \begin{tikzpicture}[x=.75cm]
    \node[state with output, initial left, rectangle split,
    rectangle split parts=2, rounded corners] (0) at (0,0) {$x\le
      35$ \nodepart{second} $-10$};
    \node[state with output, rectangle split,
    rectangle split parts=2, rounded corners] (1) at (5,0) {$x\le 55$
      \nodepart{second} $+40$};
    \path (0) edge[out=10, in=170] node {$x=35 \quad x\gets 0$}
    (1);
    \path (1) edge[out=-170, in=-10] node {$x=55 \quad x\gets 0$}
    (0);
  \end{tikzpicture}
  \subcaption{Weighted timed automaton $A_T$}
  \label{fig:ex.1}
  \end{center}
\end{subfigure}%
\begin{subfigure}[t]{0.44\textwidth}
  \begin{center}
    \begin{tikzpicture}[%
      state/.style={shape=rectangle, rounded corners, draw},
      y=1.4cm, x=1.4cm]
    \node[state, initial left] (1) {1};
    \node[state, right=of 1] (2) {2};
    \path (1) edge[bend left] node (t) {$-350$} (2);
    \path (2) edge[bend left] node {$2200$} (1);
  \end{tikzpicture}
  \subcaption{Equivalent (finite) weighted automaton $A$}
  \label{fig:ex.1.finite}
  \end{center}
\end{subfigure}
\caption{Satellite example: two representations of the base circuit.
We mark the acceptance condition of the automaton above its depiction.
Here both automata are in fact looping automata, as all infinite runs are accepted.
}
\label{fig:ex.1.both}
\end{figure}
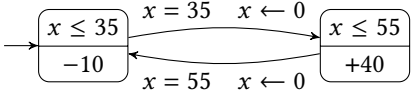
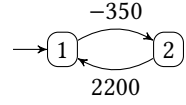

This is a weighted timed automaton (the formalism will be introduced
in Section~\ref{se:wta}) with one clock, $x$, and two locations.  The
clock is used to model time, which progresses with a constant rate but
can be reset on transitions.  The initial location on the left (modeling
earth shadow) is only active as long as $x\le 35$, and given that $x$
is initially zero, this means that the model may stay here
for at most $35$ minutes.  Staying in this location consumes $10$ energy
units per minute, corresponding to the satellite's base consumption.

After $35$ minutes the model transitions to the ``sun'' location on the
right, where it can stay for at most $55$ minutes and the solar panels
produce $50$ energy units per minute, from which the base consumption
has to be subtracted.  Note that the transitions can only be taken if
the clock shows exactly $35$ (resp. $55$) minutes; the clock is reset to zero
after the transition, as denoted by $x\gets 0$.  This ensures that the
satellite stays exactly $35$ minutes in the shadow and $55$ minutes in
the sun, roughly consistent with the physical reality.

Figure~\ref{fig:ex.1.finite} shows a translation of the automaton of Figure~\ref{fig:ex.1}
to a weighted untimed automaton.
State $1$ corresponds to the ``shadow'' location and transitions are
annotated with the corresponding weights, the rate of the location
multiplied by the time spent in it.  In Section~\ref{sec:cpa} we will
show how to obtain a weighted automaton from a weighted timed
automaton with precisely one clock.

One may now pose the following question: for a given battery capacity
$b$ and an initial charge $c$, is it possible for the satellite to
function indefinitely without ever running out of energy?  It is clear
that for $c<350$ or $b<350$, the answer is no: the satellite will run out of
battery before ever leaving Earth's shadow; for $b\ge 350$ and $c\ge 350$, it will indeed never
run out of energy.

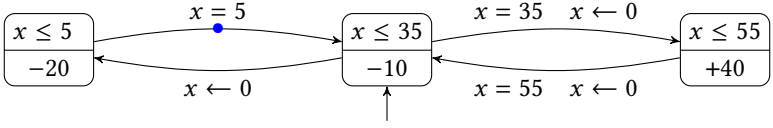
\begin{figure}[tbp]
    \centering
    \begin{tikzpicture}[x=.9cm]
      \path[use as bounding box] (-5,-.8) -- (5,.6);
      \node[state with output, initial below, rectangle split,
    rectangle split parts=2, rounded corners] (0) at (0,0) {$x\le
        35$ \nodepart{second} $-10$};
      \node[state with output, rectangle split, rectangle split parts=2,
        rounded corners] (1) at (5,0) {$x\le 55$
        \nodepart{second} $+40$};
      \path (0) edge[out=10, in=170] node {$x=35 \quad x\gets 0$}
      (1);
      \path (1) edge[out=-170, in=-10] node {$x=55 \quad x\gets 0$}
      (0);
      \node[state with output, rectangle split, rectangle split parts=2,
        rounded corners] (0a) at (-5,0) {$x\le 5\phantom{5}$
        \nodepart{second} $-20$};
      \path (0) edge[out=-170, in=-10] node {$x\gets 0$} (0a);
      \path (0a) edge[out=10, in=170] node {$x=5$}
      node[anchor=center] {$\color{blue}{\bullet}$} (0);
    \end{tikzpicture}
    \caption{Weighted timed Büchi automaton $A_{T1}$ for satellite with work module.
      Only infinite runs containing infinitely many transitions marked by
      $\color{blue}{\bullet}$ are accepted.
    }
    \label{fig:ex.2}
\end{figure}

Now assume that the satellite also has some work to do: once in a
while it must, for example, send some collected data to earth.  Given
that we can only handle weighted timed automata with precisely one clock
(see Section~\ref{se:wta}), we model the combined system as in
Figure~\ref{fig:ex.2}.
That is, work (modeled by the leftmost location) takes 5 minutes
and costs an extra $10$ energy units per minute.  The colored dot on
the outgoing transition of the work state marks a (transition-based)
Büchi condition which forces to take the transition infinitely
often in order for the run to be accepted.
As a consequence, all accepting runs also visit the ``work'' state indefinitely
often, consistent with the demand to send data once in a while.  In
order to model the system within the constraints of our modeling
formalism, we must make two simplifying assumptions, both unrealistic
but conservative:
\begin{itemize}
\item work occurs during earth shadow;
\item work prolongs earth shadow time.
\end{itemize}
The reason for the second property is that the clock $x$ is reset to
$0$ when entering the work state; otherwise we would not be able to
model that it lasts $5$ minutes without introducing a second clock.
It is clear how further work modules may be added in a similar way,
each with their own accepting color.

We will come back to this example later and, in particular, argue that
the above assumptions are indeed conservative
in the sense that any behavior admitted in our model is also present in a more realistic model which we will introduce.
\end{rexample}

\section{Energy Büchi Problems in Finite Weighted Automata}
\label{sec:wba}

We now define energy Büchi problems in finite weighted automata and
show how they may be solved.  The similar setting for weighted
\emph{timed} automata will be introduced in Section~\ref{se:wta}.

\begin{definition}[WBA]
  A \emph{weighted} (transition-based, generalized) \emph{Büchi
    automaton} (WBA) is a structure $A=(\mathcal{M}, S, s_0, T)$
  consisting of
 a finite set of colors $\mathcal{M}$,
 a set of states $S$ with initial state $s_0\in S$, and
 a set of transitions $T\subseteq S\times 2^\mathcal{M}\times \Real\times S$.
\end{definition}

A transition $t = (s, M, w, s')\in T$ in a WBA is thus
annotated by a set of colors $M$ and a real weight
$w$, denoted by $s\tto{w}_M s'$; to save ink, we may omit any or
all of $w$ and $M$ from transitions and $\mathcal{M}$ from WBAs.
The automaton $A$ is \emph{finite} if $S$ and
$T\subseteq S\times 2^\mathcal{M}\times \Int\times S$ are finite
(thus finite implies integer-weighted).
We use binary encoding for integer weights.

A \emph{run} in a WBA is a finite or infinite sequence
$\rho = s_1\to s_2\to\dotsm$.  We write $\first{\rho}=s_1$ for its
starting state and, if $\rho$ is finite, $\last{\rho}$ for its final
state.  \emph{Concatenation} $\rho_1 \rho_2$ of runs is the usual
partial operation defined if $\rho_1$ is finite and
$\last{\rho_1}=\first{\rho_2}$.  Also \emph{iteration} $\rho^n$ of
finite runs is defined as usual, for $\first{\rho}=\last{\rho}$, and
$\rho^\omega$ denotes infinite iteration.

For $c, b\in \Nat$ \footnote{Natural numbers include $0$.}  and a run
$\rho = s_1\tto{w_1} s_2\tto{w_2}\dotsm$, the
\emph{$(c, b)$-accumulated weights} of $\rho$ are the elements of the
finite or infinite sequence $\weights{c}{b}{\rho}=(e_1, e_2,\dotsc)$
defined by $e_1=\min(b, c)$ and $e_{i+1}=\min(b, e_i+w_{i})$.  Hence
the transition weights are accumulated, starting with $c$, but only up to the maximum
bound $b$; increases above $b$ are discarded.  We call $c$ the
\emph{initial credit} and $b$ the \emph{weak upper bound}.

\begin{rexample}
  In Figure~\ref{fig:ex.1.finite}, and choosing $c=360$ and $b=750$, we
  have a single infinite run
  $\rho =
  1\tto{-350}2\tto{2200}1\tto{-350}2\tto{2200}1\tto{-350}\dotsm$, with
  $\weights{c}{b}{\rho} = (360, 10, 750, 400, 750, \dotsc)$.
\end{rexample}

A run $\rho$ as above is said to be \emph{$(c, b)$-feasible} if
$\weights{c}{b}{\rho}_i\ge 0$ for all indices $i$, that is, the
accumulated weights of all prefixes are non-negative.  (This is the
case for the example run above.)
For a finite run $\rho=s_1\tto{w_1}\dotsm\to s_n$ we also write $\lweight{c\downarrow b}{\rho}=\weights{c}{b}{\rho}_n$ for its final accumulated weight,
and we will omit the ``${\downarrow} b$'' parts of this notation
if no confusion can arise.
The following simple fact will prove very useful later.

\begin{lemma}
  \label{le:lweight}
  For any finite run $\rho$ and $c_1, c_2, b\in \Nat$
  with $c_1\le c_2$,
  $\lweight{c_1}{\rho}\le \lweight{c_2}{\rho}\le \lweight{c_1}{\rho}+c_2-c_1$.
\end{lemma}

\begin{proof}
  For simplicity we may assume $c_2=c_1+1$.
  If there is an index $i$ such that $\sweights{c_1}{\rho}_i=b$,
  then $\lweight{c_2}{\rho}=\lweight{c_1}{\rho}$;
  otherwise, $\lweight{c_2}{\rho}=\lweight{c_1}{\rho}+1$.
\end{proof}

An infinite run $\rho = s_1\to_{M_1} s_2\to_{M_2}\dotsm$ is
(generalized, transition-based) \emph{Büchi accepted} if all colors in $\mathcal{M}$ are seen infinitely
often along $\rho$, that is,
$\mathcal{M}=\text{Inf}((M_i)_{i\geq 1})$ where
\[\text{Inf}((M_i)_{i\geq 1})=\{m\in \mathcal{M}\mid \forall i \in \Nat\ldotp
\exists j \in \Nat\ldotp j>i \text{ and }m\in M_j\}.\]

\begin{definition}
  The \emph{energy Büchi problem} for a finite WBA $A$, initial credit $c\in \Nat$
  and weak upper bound $b\in \Nat$ is to ask whether there
  exists a Büchi accepted $(c,b)$-feasible run in $A$.
\end{definition}

\begin{definition}
  The \emph{energy Büchi trace problem} for a finite WBA
  $A$ initial credit $c\in \Nat$ and weak upper bound $b\in \Nat$ is to
  extract a witness run respecting the quantitative and qualitative
  constraints given that the corresponding energy Büchi problem was
  answered positively.
\end{definition}

These definitions naturally extend to other acceptance conditions like parity.
A weighted (transition-based) Parity automaton (WPA) has the same structure
as a WBA, however the meaning of the colors changes.
Here, each color is assigned to a non-negative integer and we accept
all runs for which the largest color seen infinitely often is even.
This is commonly called a max-even-Parity condition, more details
on this are given in Sec.~\ref{sec:parity_cond}.

Energy problems for finite weighted automata without acceptance condition,
asking for the existence of \emph{any} infinite $c$-feasible run, have been introduced in
\cite{DBLP:conf/formats/BouyerFLMS08} and extended to multiple weight
dimensions in \cite{DBLP:conf/ictac/FahrenbergJLS11} where they are
related to vector addition systems and Petri nets.  We extend them to
(transition-based) generalized Büchi or Parity conditions here but do not
consider an extension to multiple weight dimensions.

\subsection*{Degeneralization}\label{sec:degen}

As a first step to solving energy problems for finite WBAs, we show that
the standard counting construction which transforms generalized Büchi
automata into simple Büchi automata with only one color, see for example
\cite{DBLP:conf/cav/GastinO01}, also applies in our weighted setting.
To see that, let $A=(\mathcal{M}, S, s_0, T)$ be a (generalized) WBA,
write $\mathcal{M}=\{m_1,\dotsc, m_k\}$, and define another WBA
$\bar{A}=(\bar{\mathcal{M}}, \bar{S}, \bar{s}_0, \bar{T})$ as follows:
\begin{gather*}
  \bar{\mathcal{M}} = \{m_a\} \qquad \bar{S} = S\times\{1,\dotsc, k\} \qquad \bar{s}_0 = (s_0, 1)
  \\
  \bar{T} =
  \begin{aligned}[t]
    & \big\{ ((s, i), \emptyset, w, (s', i)) \bigmid (s, M, w, s')\in T, m_i\notin M \big\} \\
    & \cup \big\{ ((s, i), \emptyset, w, (s', i+1)) \bigmid i\ne k, (s, M, w, s')\in T, m_i\in M \big\} \\
    & \cup \big\{ ((s, k), \{m_a\}, w, (s', 1)) \bigmid (s, M, w, s')\in T, m_k\in M \big\}
  \end{aligned}
\end{gather*}
That is, we split the states of $A$ into levels $\{1,\dotsc, k\}$.
At level $i$, the same transitions exist as in $A$, except those
colored with $m_i$; seeing such a transition puts us into level $i+1$,
or $1$ if $i=k$.  In the latter case, the transition in $\bar{A}$ is
colored by its only color $m_a$.
For succinctness we call such transitions back-edges, since they loop
back to the first level of the degeneralization and have a key role in our algorithm.
Intuitively, this preserves the language as we are sure that all
colors of the original automaton $A$ have been seen:

\begin{lemma}
  \label{le:degen}
  For any $c,b\in \Nat$, $A$ admits a Büchi accepted $(c,b)$-feasible run
  iff $\bar{A}$ does.
\end{lemma}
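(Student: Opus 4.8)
The plan is to establish a tight correspondence between runs of $A$ and runs of $\bar A$ that preserves both transition weights and the "seen infinitely often" status, so that feasibility and Büchi acceptance transfer in both directions. The key observation is that the projection $\pi\colon \bar S\to S$, $(s,i)\mapsto s$, extends to a map on runs: every transition of $\bar A$ projects to a transition of $A$ carrying the \emph{same} weight $w$ (the construction only changes the color annotation and the level component, never $w$). Conversely, given a run $\rho$ in $A$ starting at $s_0$, there is a \emph{unique} lift $\bar\rho$ in $\bar A$ starting at $\bar s_0=(s_0,1)$: at each step the level either stays put or advances by one according to the deterministic rule $i\mapsto i+1$ (mod the wrap at $k$) depending on whether $m_i$ is in the color set of the current transition. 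Since $\pi$ preserves weights step by step, we have $\weights{c}{b}{\rho}=\weights{c}{b}{\bar\rho}$ for the lift, and hence $\rho$ is $c$-feasible iff $\bar\rho$ is; the same equality of accumulated-weight sequences handles the other direction via projection.

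First I would state and prove the bijection-on-runs-from-the-initial-state together with the weight-preservation, which is immediate from the definition of $\bar T$. Then I would handle the Büchi part. For the "only if" direction: suppose $\rho$ is Büchi accepted in $A$, so every $m_j\in\mcal M$ occurs infinitely often. In the lift $\bar\rho$, the level component is non-decreasing between wraps and strictly increases exactly when the "current" color $m_i$ is seen; because each color appears infinitely often, from any level $i<k$ we eventually see $m_i$ and advance, and from level $k$ we eventually see $m_k$ and wrap to $1$ while emitting $m_a$. Hence $m_a$ is emitted infinitely often, so $\bar\rho$ is Büchi accepted (there is only one color to see). For the "if" direction: if $\bar\rho$ sees $m_a$ infinitely often, then between any two consecutive $m_a$-transitions the level cycles through all of $1,\dots,k$, which forces each $m_i$ to be seen along the corresponding segment of the projected run $\rho=\pi(\bar\rho)$; since there are infinitely many such $m_a$-transitions, every $m_i$ is seen infinitely often in $\rho$, i.e. $\rho$ is Büchi accepted.

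The main obstacle, such as it is, is bookkeeping rather than depth: one must be careful that the level-advancement rule really is deterministic and total (so the lift is well defined even when a transition carries several colors, including both $m_i$ and others), and one must argue the "between two wraps all levels are visited" claim cleanly — formally, that the level sequence, read between consecutive occurrences of $m_a$, passes through $1,2,\dots,k$ in order. I would phrase this as a short invariant on finite prefixes: after a prefix of $\bar\rho$ ending in level $i$, the colors $m_1,\dots,m_{i-1}$ have all been seen since the last $m_a$-transition (or since the start). Feeding this invariant the hypothesis of infinitely many color occurrences (resp. infinitely many $m_a$'s) closes both implications. Everything else — that weights are untouched, that $\bar s_0$ projects to $s_0$, that concatenation of the segments reconstitutes the run — is routine.
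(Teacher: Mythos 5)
Your proposal is correct and follows essentially the same route as the paper: lift runs of $A$ to $\bar A$ (and project back), observe that weights are untouched so feasibility transfers, and argue Büchi acceptance via the level component cycling through $1,\dots,k$ between consecutive $m_a$-transitions. If anything, your forward direction is slightly more careful than the paper's, which jumps from ``$m_k$ seen infinitely often in $\rho$'' to ``$m_a$ seen infinitely often in $\bar\rho$'' without spelling out that the level must keep advancing because every $m_i$ recurs --- exactly the invariant you propose to make explicit.
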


\begin{proof}
  Any infinite run $\rho$ in $A$ translates to an infinite run
  $\bar{\rho}$ in $\bar{A}$ by iteratively replacing transitions
  $(s, M, w, s')$ in $\rho$ with the corresponding transitions
  in $\bar{T}$.  Conversely, if $\bar{\rho}$ is an infinite run in
  $\bar{A}$, then we may replace any transition
  $((s, i), M, w, (s', j))$ in $\bar{\rho}$ with its
  preimage in $T$, yielding a run $\rho$
  in $A$.

  Given that the above construction does not affect the weights of
  transitions, it is clear that $\rho$ is $(c,b)$-feasible iff
  $\bar{\rho}$ is.  If $\rho$ is Büchi accepted, then $m_k$ is seen
  infinitely often along $\rho$, hence $m_a$ is seen infinitely often
  along $\bar{\rho}$ and $\bar{\rho}$ is Büchi accepted.  For the
  converse, assume that $\bar{\rho}$ is Büchi accepted, then $m_a$ is
  seen infinitely often along $\bar{\rho}$ and hence $m_k$ is seen
  infinitely often along $\rho$.

  To finish the proof, we have seen that $\bar{\rho}$ contains infinitely
  many transitions of the form $((s, k), \{m_a\}, x, (s', 1))$.  Hence
  $\bar{\rho}$ also contains infinitely many sequences of transitions
  \begin{equation*}
    (s_1, 1) \to\dotsm\to (s_1', 1) \to (s_2, 2) \to\dotsm\to (s_2',
    2)
    \to\dotsm\to (s_{k-1}', k-1) \to (s_k, k),
  \end{equation*}
  traversing all levels of $\bar{A}$.
  Each of these sequences corresponds, by construction, to a sequence
  of transitions in $\rho$ along which all of $m_1,\dotsc, m_{k-1}$
  are seen.  Hence all of $m_1,\dotsc, m_{k-1}$ have to be seen
  infinitely often along $\rho$ and it is thus Büchi accepted.
\end{proof}

\subsection*{Reduction to lassos}
\label{sec:RedLasso}

An infinite run $\rho$ in $A$ is~a \emph{lasso} if
$\rho=\gamma_1 \gamma_2^\omega$ for finite runs $\gamma_1$ and $\gamma_2$.
The following lemma shows that it suffices to search for lassos in
order to solve energy Büchi problems.

\begin{lemma}
  \label{le:lasso}
  For any $c,b\in \Nat$, $A$ admits a Büchi accepted $(c,b)$-feasible
  infinite run iff it admits a Büchi accepted $(c,b)$-feasible lasso.
\end{lemma}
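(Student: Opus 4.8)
The backward direction is immediate, since a lasso is in particular an infinite run, so the plan is to establish the forward direction: from a Büchi accepted $c$-feasible infinite run $\rho = s_1 \tto{w_1} s_2 \tto{w_2} \cdots$ with accumulated weights $\weights{c}{b}{\rho} = (e_1, e_2, \dots)$, extract a Büchi accepted $c$-feasible lasso. The idea, following \cite[Lem.~6]{DBLP:conf/formats/BouyerFLMS08}, is to pump a suitable cycle of $\rho$.

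First I would argue that the set of \emph{configurations} $S \times \{0, 1, \dots, b\}$ is finite: $S$ is finite because $A$ is, and since $A$ is finite all weights are integers, so each $e_i$ lies in $\{0, \dots, b\}$ — bounded below by $c$-feasibility and above by the definition of the accumulated weights. As $\rho$ visits infinitely many positions, by the pigeonhole principle some configuration $(s, e)$ coincides with $(s_n, e_n)$ for infinitely many $n$. Next I would use Büchi acceptance of $\rho$ to choose indices $i_0 < i_1 < i_2$ such that $(s_{i_0}, e_{i_0}) = (s_{i_2}, e_{i_2}) = (s, e)$ and every color in $\mcal{M}$ appears on some transition between positions $i_0$ and $i_1$; this is possible because the suffix of $\rho$ from any occurrence of $(s,e)$ is still Büchi accepted, so after seeing all colors once more we can wait for the next occurrence of $(s, e)$. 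Then I set $\gamma_1 = s_1 \to \dots \to s_{i_0}$ and $\gamma_2 = s_{i_0} \to \dots \to s_{i_2}$; since $\last{\gamma_1} = s_{i_0} = \first{\gamma_2} = \last{\gamma_2}$ and $\gamma_2$ is nonempty (as $i_2 > i_0$), the lasso $\gamma_1 \gamma_2^\omega$ is a well-defined infinite run.

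It then remains to verify the two properties. Büchi acceptance is routine: each copy of $\gamma_2$ traverses the segment between positions $i_0$ and $i_1$ and hence sees every color of $\mcal{M}$, and $\gamma_2$ recurs infinitely often. For $c$-feasibility I would use that the accumulation step $e \mapsto \min(b, e + w)$ is deterministic in the current energy: the accumulated weights along $\gamma_1$ are precisely $e_1, \dots, e_{i_0}$, all non-negative; the first copy of $\gamma_2$ starts at energy $e_{i_0} = e$, exactly as $\rho$ does at position $i_0$, so its accumulated weights reproduce $e_{i_0}, \dots, e_{i_2}$ of $\rho$ — all non-negative — and it ends at energy $e_{i_2} = e$. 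By induction every copy of $\gamma_2$ restarts at energy $e$, so all copies have identical, non-negative accumulated weights, and $\gamma_1 \gamma_2^\omega$ is $c$-feasible.

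The delicate point is the feasibility argument, and the reason to pump a cycle between two visits to the \emph{same} configuration (state together with energy level) rather than merely the same state: this guarantees that every iteration of $\gamma_2$ restarts at the identical energy and can therefore be repeated verbatim without the accumulated weight ever dropping below zero, whereas pumping a cycle that returns to the same state but with lower energy could destroy feasibility. This is also precisely where finiteness of $A$ and the weak upper bound $b$ enter: the former to force integer weights, the latter to keep the energy component of a configuration within a finite range, which is what makes the pigeonhole step work.
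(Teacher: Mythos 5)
Your proof is correct, but it takes a genuinely different route from the paper's. The paper first degeneralizes to a single color and then argues by cases on the decomposition of $\rho$: either $\rho$ contains an accepting cycle $\gamma_2$ with $\lweight{0}{\gamma_2}\ge 0$, in which case $\gamma_1\gamma_2^\omega$ is the desired lasso (a step that silently relies on monotonicity of the energy-update map --- feasibility from credit $0$ with a non-negative final level implies that iterating $\gamma_2$ from any credit $\ge 0$ remains feasible); or else every cycle through an infinitely recurring accepting transition satisfies $\lweight{0}{t\gamma_i}\le -1$ by integrality, so that after $c'+1$ such cycles the accumulated weight drops below zero, contradicting $c$-feasibility. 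You instead apply the pigeonhole principle to the finite configuration space $S\times\{0,\dots,b\}$ and pump a cycle between two occurrences of the same pair of state and energy level, chosen so that all colors of $\mcal{M}$ appear inside it. Your version is more self-contained: it handles the generalized acceptance condition directly without degeneralization, and feasibility of the pumped lasso is immediate because each iteration restarts at the identical energy and reproduces the same non-negative weight sequence, so no monotonicity lemma is needed. What the paper's argument buys in exchange is a cycle characterized intrinsically by the condition $\lweight{0}{\gamma_2}\ge 0$, which is closer to what the Bellman-Ford-based search of Sect.~\ref{sec:implementation} actually looks for; your cycle is only guaranteed to exist, with a length bounded via $|S|\cdot(b+1)$ and the recurrence of colors. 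You correctly identify where finiteness of $A$ (hence integer weights) and the weak upper bound $b$ enter; note that the paper's contradiction step uses integrality in exactly the same essential way.
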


\begin{proof}
  By degeneralization we may assume that $A$ has only one color.
  Let $\rho$ be a Büchi accepted $(c,b)$-feasible run in $A$.
  Assume first that there exist runs $\gamma_1$, $\gamma_2$ and $\rho'$
  (the first two finite and the last infinite)
  such that $\rho=\gamma_1 \gamma_2 \rho'$,
  $\first{\gamma_2}=\last{\gamma_2}$ (\ie $\gamma_2$ is a
  cycle), $\gamma_2$ visits an accepting transition,
  and $\lweight{0}{\gamma_2}\ge 0$.
  Using the first inequality of Lemma \ref{le:lweight},
  $\lweight{\bar c}{\gamma_2}\ge 0$ for any $\bar c\in \Nat$,
  so $\bar{\rho}=\gamma_1 \gamma_2^\omega$ is a Büchi accepted feasible lasso.

  Now assume that there is no cycle $\gamma_2$ as above.
  Let $t\in T$ be an accepting transition such that $\rho$ visits $t$ infinitely often
  and write $\rho=\gamma' t \gamma_1 t \gamma_2\dotsc$ as an infinite concatenation of finite runs.
  Then all $t \gamma_i$ are cycles which visit an accepting transition.
  By our assumption they must thus all satisfy
  $\lweight{0}{t \gamma_i}<0$, \ie $\lweight{0}{t \gamma_i}\le -1$.
  Using the second inequality of Lemma \ref{le:lweight},
  $\lweight{\bar c}{t \gamma_i}\le \bar c-1$ for all $\bar c\in \Nat$.
  Let $c'=\lweight{c}{\gamma'}$, then
  $\lweight{c}{\gamma' t \gamma_1\dotsc t \gamma_{c'+1}}<0$ in
  contradiction to $c$-fea\-si\-bil\-i\-ty of~$\rho$.
\end{proof}

Hence our energy Büchi problem may be solved by searching for Büchi
accepted $c$-feasible lassos.  We detail how to do this in
Section~\ref{sec:implementation}, here we just sum up the complexity
result which we prove at the end of Section~\ref{se:impl}.

\begin{theorem}
  \label{th:wba}
  Energy Büchi problems for finite WBA are decidable in polynomial
  time.
\end{theorem}

\section{Energy Büchi Problems for Weighted Timed Automata}
\label{se:wta}

We now extend our setting to weighted timed automata.
Let $X$ be a finite set of clocks.  We denote by $\Phi( X)$ the set of
\emph{clock constraints} $\phi$ on $X$, defined by the following grammar:
\begin{equation*}
  \phi ::= x\bowtie k\mid \phi_1\wedge \phi_2
  \qquad\quad
  \big( x\in X, k\in \Nat, {\bowtie}\in\{{\le}, {<}, {\ge}, {>}, {=}\} \big)
\end{equation*}
A \emph{clock valuation} on $X$ is a function $v: X\to \Realnn$.  The
clock valuation $v_0$ is given by $v_0(x)=0$ for all $x\in X$, and
for $v: X\to \Realnn$, $d\in \Realnn$, and
$R: X\to (\Nat\cup\{\bot\})$, we define the delay $v+d$ and reset
$v[R]$ by
\begin{equation*}
  (v+d)(x)= v(x)+d, \qquad%
  v[R](x)=
  \begin{cases}
    v(x) &\text{if } R(x)=\bot, \\
    R(x) &\text{otherwise}.
  \end{cases}
\end{equation*}
Note that in $v[R]$ we allow clocks to be reset to arbitrary
non-negative integers instead of only $0$ which is assumed in most of
the literature.  It is known \cite{DBLP:journals/sttt/LarsenPY97} that
this does not change expressivity, but it adds notational convenience.
A clock valuation $v$ \emph{satisfies} clock constraint $\phi$,
denoted $v\models \phi$, if $\phi$ evaluates to true with $x$ replaced
by $v(x)$ for all $x\in X$.

\begin{definition}[WTBA]
  A \emph{weighted timed} (transition-based, generalized) \emph{Büchi
    automaton} (WTBA) is a structure
  $A=(\mathcal{M}, Q, q_0, X, I, E, r)$ consisting of
 a finite set of colors $\mathcal{M}$,
 a finite set of locations $Q$ with initial location $q_0\in Q$,
 a finite set of clocks $X$,
 location invariants $I: Q\to \Phi(X)$,
 a finite set of edges $E\subseteq Q\times 2^\mathcal{M}\times \Phi(X)\times (\Nat\cup\{\bot\})^X\times Q$, and
 location weight-rates $r: Q\to \Int$.
\end{definition}

As before, we may omit $\mathcal{M}$ from the signature and colors from edges if they are not necessary in the context.
Note that the edges carry no weights here, which would correspond to discrete
weight updates. In a WTBA, only locations are weighted
by a rate.  Even without Büchi conditions, the approach laid out here
would not work for weighted edges.  This was already noted in
\cite{DBLP:conf/formats/BouyerFLMS08}; instead it requires different
methods which are developed in \cite{DBLP:conf/hybrid/BouyerFLM10}
(see also \cite{DBLP:journals/actaC/EsikFLQ17,
  DBLP:journals/actaC/EsikFLQ17a}).  There, one-clock weighted timed
automata (with edge weights) are translated to finite automata
weighted with so-called \emph{energy functions} instead of integers.
We believe that our extension to Büchi conditions should also work in
this extended setting, but leave the details to future work.

The \emph{semantics} of a WTBA $A$ as above is the (infinite) WBA
$\sem{A}=(\mathcal{M}, S, s_0, T)$ given by
$S=\{(q, v)\in Q\times \Realnn^X\bigmid v\models I(q)\}$ and
$s_0=(q_0, v_0)$.
Transitions in $T$ are of the following two types:
\begin{itemize}
\item \emph{delays} $(q, v)\smash{\tto{w}}^d_\emptyset (q, v+d)$ for all
  $(q, v)\in S$ and $d\in \Realnn$ for which $v+d'\models I(q)$ for
  all $d'\in [0, d]$, with $w=r(q) d$;
  \footnote{Here we annotate transitions with the time $d$ which passes; we only need this to exclude Zeno runs below and will otherwise omit the annotation.}

  \smallskip
\item \emph{switches} $(q, v)\smash{\tto{0}}^0_{M} (q', v')$ for all
  $e=(q, M, g, R, q')\in E$ for which $v\models g$, $v'=v[R]$
  and $v'\models I(q')$.
\end{itemize}

Each state in $\sem{A}$ corresponds to a tuple containing
a location in $A$ and a clock valuation $X\to \Realnn$. This allows
to keep track of the discrete state as well as the evolution of the clocks.
By abuse of notation, we will sometimes write $(q, v)\in \sem{A}$
instead of $(q, v)\in S$, for $S$ as defined above.

We may now pose energy Büchi problems also for WTBAs,
but we wish to exclude infinite runs in which time is bounded, so-called
Zeno runs.
Formally an infinite run $(q_0, v_0)\to^{d_1} (q_1, v_1)\to^{d_2}\dotsm$ is
\emph{Zeno} if $\sum d_i$ is finite:
Zeno runs admit infinitely many steps in finite time and
are hence considered unrealistic from a modeling point of view
\cite{DBLP:journals/tcs/AlurD94, DBLP:journals/corr/abs-1106-1850}.

\begin{definition}
  The \emph{energy Büchi problem} for a WTBA $A$, initial credit $c\in \Nat$ and weak upper bound $b\in \Nat$ is to ask if there exists a Büchi
  accepted $(c,b)$-feasible non-Zeno run in $\sem{A}$.
\end{definition}

We continue our running example; but to do so properly, we need to introduce products of WTBAs.
Let $A_i=(\mathcal{M}_i, Q_i, q_0^i, X_i, I_i, E_i, r_i)$, for $i\in\{1, 2\}$, be WTBAs.
Their \emph{product} is the WTBA $A_1\mathbin{\|} A_2=(\mathcal{M}, Q, q_0, X, I, E, r)$ with
\begin{gather*}
  \mathcal M=\mathcal M_1\cup \mathcal M_2, \qquad Q=Q_1\times Q_2, \qquad q_0=(q_0^1, q_0^2), \qquad
  X=X_1\cup X_2, \\
  I((q_1, q_2))=I(q_1)\land I(q_2), \qquad r((q_1, q_2))=r(q_1)+r(q_2), \\
  \begin{aligned}
    E ={} &\big\{ ((q_1, q_2), M, g, R, (q_1', q_2)) \bigmid (q_1, M, g, R, q_1')\in E_1 \big\} \\
    {}\cup{} &\big\{ ((q_1, q_2), M, g, R, (q_1, q_2')) \bigmid (q_2, M, g, R, q_2')\in E_2 \big\}.
  \end{aligned}
\end{gather*}

\begin{rexample}
Let $A$ be the basic WTBA of Figure~\ref{fig:ex.1}
and $A_1$ the combination of $A$ with the work module of
Figure~\ref{fig:ex.2}.
Now, instead of building $A_1$ as we have done, a principled way of constructing a model for the satellite-with-work-module would be to first model the work module $W$ and then form the product $A\mathbin{\|} W$.
We show such a work module and the resulting product $B_1$ in Figure~\ref{fig:ex.work.combi}.

\begin{figure}[tbp]
\begin{subfigure}[t]{0.3\textwidth}
  \centering
  \begin{tikzpicture}[x=.9cm, y=.5cm,baseline=(0a.south)]
    \node[state with output, initial left, rectangle split,
    rectangle split parts=2, rounded corners] (0) at (0,0)
    { \nodepart{second} $0$};
    \node[state with output, rectangle split,
    rectangle split parts=2, rounded corners] (0a) at (0,-5) {$y\le 5$
      \nodepart{second} $-10$};
    \path (0) edge[bend right=10] node[left] {$y\gets 0$} (0a);
    \path (0a) edge[bend right=10] node[right] {$y=5$}
    node[anchor=center] {$\color{blue}\bullet$} (0);
  \end{tikzpicture}
  \subcaption{}
  \label{fig:ex.work}
\end{subfigure}
\begin{subfigure}[t]{0.68\textwidth}
  \centering
  \begin{tikzpicture}[x=1.1cm, y=.5cm,baseline=(2.south)]
    \node[state with output, initial left, rectangle split,
    rectangle split parts=2, rounded corners] (0) at (0,0) {$x\le
      35$ \nodepart{second} $-10$};
    \node[state with output, rectangle split, rectangle split
    parts=2, rounded corners] (1) at (5,0) {$x\le 55$
      \nodepart{second} $+40$};
    \path (0) edge[out=10, in=170] node {$x=35 \quad x\gets 0$}
    (1);
    \path (1) edge[out=-170, in=-10] node {$x=55 \quad x\gets 0$}
    (0);
    \node[state with output, rectangle split, rectangle split
    parts=2, rounded corners] (2) at (0,-5) {$x\le
      35\land y\le 5$ \nodepart{second} $-20$};
    \node[state with output, rectangle split, rectangle split
    parts=2, rounded corners] (3) at (5,-5) {$x\le 55\land y\le 5$
      \nodepart{second} $+30$};
    \path (2) edge[out=10, in=170] node {$x=35 \quad x\gets 0$}
    (3);
    \path (3) edge[out=-170, in=-10] node {$x=55 \quad x\gets 0$}
    (2);
    \path (0) edge[out=260, in=100] node[swap] {$y\gets 0$} (2);
    \path (2) edge[out=80, in=280] node[swap] {$y=5$}
    node[anchor=center] {$\color{blue}\bullet$} (0);
    \path (1) edge[out=260, in=100] node[swap] {$y\gets 0$} (3);
    \path (3) edge[out=80, in=280] node[swap] {$y=5$}
    node[anchor=center] {$\color{blue}\bullet$} (1);
  \end{tikzpicture}
  \subcaption{}
  \label{fig:ex.prod}
\end{subfigure}
\caption{Satellite example. (a) work module $W$;
  (b) product $B_1=A\mathbin{\|} W$}
\label{fig:ex.work.combi}
\end{figure}

As expected, $W$ expresses that work takes $5$ minutes and costs $10$ energy units per minute,
and the Büchi condition enforces that work is executed infinitely often.
The product $B_1$ models the shadow-sun phases together with the fact that work may be executed at any time,
and contrary to our ``unrealistic'' model $A_1$ of Figure~\ref{fig:ex.2},
work does not prolong earth shadow time.

Now $B_1$ has \emph{two} clocks, and we will see below that our
constructions can handle only one.  This is the reason for our
``unrealistic'' model $A_1$,
and we can now state precisely in which sense it is conservative:
if $\sem{A_1}$ admits a Büchi accepted $c$-feasible non-Zeno run, then so does $\sem{B_1}$.
For a proof of this fact, one notes that any infinite run $\rho$ in $\sem{A_1}$ may be translated to an infinite run $\bar{\rho}$ in $\sem{B_1}$ by adjusting the clock valuation by $5$ whenever the work module is visited.
\end{rexample}

\subsection*{Bounding Clocks}

As a first step to solve energy Büchi problems for WTBAs,
we show that we may assume that the clocks in any WTBA~$A$ are bounded
above by some $N\in \Nat$, \ie such that $v(x)\le N$ for all
$(q, v)\in \sem{A}$ and $x\in X$.  This is shown for reachability in
\cite{DBLP:conf/hybrid/BehrmannFHLPRV01}; the following lemma extends it to Büchi acceptance.

\begin{lemma}
  \label{le:clockbound}
  Let $A=(\mathcal{M}, Q, q_0, X, I, E, r)$ be a WTBA and $c,b\in \Nat$.
  Let $N$ the maximum constant appearing in any invariant $I(q)$, for
  $q \in Q$, or in any guard $g$,
  for $(q, M, g, R, q')\in E$.  There is a WTBA
  $\bar{A}=(\mathcal{M}, Q, q_0, X, \bar{I}, \bar{E}, r)$ such that
  \begin{enumerate}
  \item $v(x)\le N+2$ for all $x\in X$ and $(q, v)\in \sem{\bar{A}}$, and
  \item there exists a $(c,b)$-feasible Büchi accepted run in
    $\sem{A}$ iff such a run exists in~$\sem{\bar{A}}$.
  \end{enumerate}
\end{lemma}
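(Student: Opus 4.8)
The plan is to follow the classical clock‑bounding idea of \cite[Thm.~2]{DBLP:conf/hybrid/BehrmannFHLPRV01}: once a clock exceeds every constant occurring in $A$ (all of which are $\le N$), its precise value no longer influences the satisfaction of any guard or invariant, so it may be ``folded back''. We may assume without loss of generality that $A$ never resets a clock to a value above $N$ — replacing any larger reset target by $N+1$ affects no guard or invariant since all their constants are $\le N$. Now set $\bar I(q)=I(q)\wedge\bigwedge_{x\in X}(x\le N+2)$ and
\[
  \bar E = E \cup \bigl\{\, (q,\epsilon,\emptyset, x=N+2, [x\mapsto N+1], q) \bigmid q\in Q,\ x\in X \,\bigr\},
\]
where $[x\mapsto N+1]$ denotes the reset sending $x$ to $N+1$ and leaving the remaining clocks untouched. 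Thus for every location and every clock we add one fresh, uncoloured, $\epsilon$‑labelled self‑loop that fires exactly when $x$ hits the new ceiling and resets it to $N+1$: such a switch carries no weight (switches never do) and consumes no time; resetting to $N+1$ rather than $N$ keeps $x$ strictly above all constants of $A$, and the gap $N+2$ instead of $N+1$ leaves room for time to pass before the next forced reset. Item~1 is then immediate, since every state of $\sem{\bar A}$ satisfies $\bar I$, which entails $x\le N+2$.

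For item~2 I would set up a run correspondence based on the relation ``$(q,v)$ and $(q,v')$ are \emph{similar}'', meaning that for each clock $x$ either $v(x)=v'(x)$ or $v(x),v'(x)>N$; as all guards and all invariants of $A$ use only constants $\le N$, similar states satisfy the same guards and the same invariants of $A$, and delays and resets preserve similarity. Given a $c$‑feasible Büchi‑accepted run $\rho$ of $\sem{A}$, I process it transition by transition while building a run of $\sem{\bar A}$ whose current state stays similar to that of $\rho$: switches are copied verbatim, and a delay $(q,v)\tto{d,\,r(q)d}_\emptyset(q,v+d)$ is replayed as a finite alternation of ``delay until some clock reaches $N+2$'' steps and $\epsilon$‑resets of that clock, which keeps all clocks in $[0,N+2]$, has total elapsed time exactly $d$, and ends in a state similar to $(q,v+d)$. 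Since the location — hence the rate $r(q)$ — is unchanged along a replayed delay, the total weight accumulated over it is exactly $r(q)d$; and because every weight increment inside one delay has the sign of $r(q)$, the capped energy level attained after each original transition (an entry of $\weights{c}{b}{\rho}$) is reproduced by the replay, while the inserted intermediate levels lie monotonically between two consecutive reproduced ones (the $\epsilon$‑resets add $0$ and, since entries are always $\le b$, do not change the level). Hence the replayed run is again $c$‑feasible; it is Büchi‑accepted because the $\epsilon$‑steps carry no colour; and it has the same total elapsed time, so it is non‑Zeno iff $\rho$ is. Conversely, from a $c$‑feasible Büchi‑accepted run of $\sem{\bar A}$ I delete all $\epsilon$‑steps and merge each maximal block of delays separated only by $\epsilon$‑resets into one delay of $\sem{A}$; this is a legal delay precisely because an $\epsilon$‑reset of $x$ can fire only when $\bar I(q)$, hence $I(q)$, imposes no upper bound on $x$, so $x$ is free to grow unboundedly in $\sem{A}$, and the energy and colour bookkeeping mirrors the forward direction.

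The subtlest point is the interaction of the $\min(b,\cdot)$ capping in $\weights{c}{b}{\cdot}$ with the subdivision (and merging) of a delay: in general $\min(b,\min(b,e+w')+w'')\ne\min(b,e+w'+w'')$, the two sides differing exactly when $w'>0$ triggers the cap and then $w''<0$; but inside a single delay every increment is $r(q)$ times a non‑negative duration, so $w'$ and $w''$ always share a sign and that bad case cannot arise. I do not expect a genuine obstacle here — the only real work is keeping the run correspondence precise — so, modulo the similarity relation, the lemma is \cite[Thm.~2]{DBLP:conf/hybrid/BehrmannFHLPRV01} together with the observation that uncoloured, zero‑weight, time‑preserving auxiliary transitions are transparent to both the Büchi and the energy conditions.
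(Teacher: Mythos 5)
Your proposal is correct and follows essentially the same route as the paper: the same added $\epsilon$-labelled self-loops guarded by $x=N+2$ resetting to $N+1$, the strengthened invariant $x\le N+2$, and the same similarity relation on valuations (the paper's $\cong$) to transfer runs back and forth. You in fact supply details the paper delegates to \cite{DBLP:conf/hybrid/BehrmannFHLPRV01} — notably the sign argument showing the $\min(b,\cdot)$ capping commutes with subdividing a delay, and the normalization of resets targeting values above $N+1$, a point the paper's choice of $N$ silently ignores — so no gap.
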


\begin{proof}
  Following \cite{DBLP:conf/hybrid/BehrmannFHLPRV01} we define
  \begin{gather*}
    \bar{E} = E\cup \big\{ (q, \emptyset, (x=N+2), (x\gets
    N+1), q) \bigmid q\in Q, x\in X \big\}, \\
    \textstyle \bar{I}(q) = I(q) \land \bigland_{x\in X} (x\le N+2),
  \end{gather*}
  that is, clock values are reset to $N+1$ whenever they reach $N+2$
  using uncolored transitions.  Hence $\bar{A}$ satisfies
  the first requirement.

  Let ${\cong}\subseteq \Realnn^X\times \Realnn^X$ be the
  relation on clock valuations defined by
  \begin{equation*}
    v\cong v' \quad\text{iff}\quad
    \forall x\in X:
    v(x)\le N\implies
    v(x)=v'(x), v(x)>N\iff v'(x)>N.
  \end{equation*}
  Using the same proof as in \cite{DBLP:conf/hybrid/BehrmannFHLPRV01},
  we may show that for any states $(q, v)$, $(q', v')$ in $\sem{A}$
  and any finite run $\rho: (q, v)\tto{w_1}\dotsm\tto{w_n} (q', v')$,
  there exists a finite run
  $\bar{\rho}: (q, v)\tto{w_1'}\dotsm\tto{w_m'} (q', v'')$ in
  $\sem{\bar{A}}$ with $v'\cong v''$ and
  $\lweight{c}{\rho} = \lweight{c}{\bar{\rho}}$, and vice versa:
  $\bar{\rho}$ is constructed from $\rho$ by inserting special reset
  transitions when appropriate, and $\rho$ from $\bar{\rho}$ by
  removing them.

  Using the above procedure, any infinite run $\rho$ in $\sem{A}$ may
  be iteratively converted to an infinite run $\bar{\rho}$ in
  $\sem{\bar{A}}$ and vice versa.  It is clear that $\rho$ is Büchi
  accepted iff $\bar{\rho}$ is, and that $\rho$ is $(c, b)$-fea\-si\-ble iff
  $\bar{\rho}$~is.
\end{proof}

\subsection*{Corner-point abstraction}
\label{sec:cpa}

We now restrict to WTBAs with only \emph{one} clock
and show how to translate these into finite untimed WBAs using the corner-point abstraction.
This abstraction
may be defined for any number of clocks, but it is shown in
\cite{DBLP:journals/pe/BouyerLM14} that the energy problem is
undecidable for weighted timed automata with four clocks or more; for two or three clocks
the problem is open.

Let $A=(\mathcal{M}, Q, q_0, X, I, E, r)$ be a WTBA with $X=\{x\}$ a
singleton.  Using Lemma \ref{le:clockbound} we may assume that $x$ is
bounded by some $N\in \Nat$, \ie such that $v(x)\le N$ for all
$(q, v)\in \sem{A}$.

Let $\mathfrak{C}$ be the set of all constants which occur in invariants $I(q)$
or guards $g$ or resets $R$ of edges $(q, M, g, R, q')$ in $A$,
and write $\mathfrak{C}\cup\{N\}=\{a_1,\dotsc, a_{n+1}\}$ with ordering
$0\le a_1<\dotsm< a_{n+1}$.
The \emph{corner-point regions}
\cite{DBLP:conf/concur/LaroussinieMS04,
  DBLP:conf/hybrid/BehrmannFHLPRV01} of $A$ are the
subsets
 $\{a_i\}$, for $i=1,\dotsc, n+1$, $[a_i, a_{i+1}\oir$, and $\oil a_i, a_{i+1}]$, for $i=1,\dotsc, n$,
of $\Realnn$;
that is, points, left-open, and right-open intervals on
$\{a_1,\dotsc, a_{n+1}\}$.

These are equivalent to clock constraints
$x=a_i$, $a_i\le x<a_{i+1}$, and $a_i<x\le a_{i+1}$, respectively,
defining a notion of implication $\cpr\implies \phi$ for $\cpr$ a
corner-point region and $\phi\in \Phi(\{x\})$.

The corner-point abstraction of $A$ is the finite WBA
$\cpa{A}=(\mathcal{M}\cup\{m_z\}, S, s_0, T)$, where
$m_z\notin \mathcal{M}$ is a new color,
$S=\{(q, \cpr)\mid q\in Q, \cpr\text{ corner-point}$ $\text{region of } A, \cpr\implies
  I(q)\}$,
$s_0=(q_0, \{0\})$, and transitions in $T$ are of the following types:
\begin{itemize}
\item \emph{delays}
  $(q, \{a_i\})\tto{0}_\emptyset (q, [a_i, a_{i+1}\oir)$,
  $(q, [a_i, a_{i+1}\oir)\tto{w}_{\{m_z\}} (q, \oil a_i,
  a_{i+1}])$ with $w=r(q)(a_{i+1}-a_i)$, and
  $(q, \oil a_i, a_{i+1}])\tto{0}_\emptyset (q, a_{i+1})$;
\item \emph{switches} $(q, \cpr)\tto{0}_{M} (q', \cpr)$ for
  $e=(q, M, g, (x\mapsto \bot), q')\in E$ with $\cpr\implies g$ and
  $(q, \cpr)\tto{0}_{M} (q', \{k\})$ for
  $e=(q, M, g, (x\mapsto k), q')\in E$ with $\cpr\implies g$.
\end{itemize}

The new color $m_z$ is used to rule out Zeno runs, see
\cite{DBLP:journals/tcs/AlurD94} for a similar construction: any Büchi
accepted infinite run in $\cpa{A}$ must have infinitely many
time-increasing delay transitions
$(q, [a_i, a_{i+1}\oir)\tto{w}_{\{m_z\}} (q, \oil a_i, a_{i+1}])$.

\begin{theorem}
  \label{th:cpa}
  Let $A$ be a one-clock WTBA and $c\in \Nat$.
  \begin{enumerate}
  \item If there is a non-Zeno Büchi accepted $c$-feasible run in
    $\sem{A}$, then there is a Büchi accepted $c$-feasible run in
    $\cpa{A}$.
  \item If there is a Büchi accepted $c$-feasible run in $\cpa{A}$,
    then there is a non-Zeno Büchi accepted $(c+\epsilon)$-feasible run
    in $\sem{A}$ for any $\epsilon>0$.
  \end{enumerate}
\end{theorem}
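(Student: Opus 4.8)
The plan is to adapt the classical corner-point abstraction argument of \cite[Prop.~8]{DBLP:conf/formats/BouyerFLMS08} to transition-based generalized Büchi acceptance. Two observations carry the extension: switches are copied verbatim into $\cpa{A}$, so the colours of $\mcal{M}$ travel along runs unchanged in both directions; and a run of $\sem{A}$ is non-Zeno exactly when the corresponding run of $\cpa{A}$ sees the auxiliary colour $m_z$ infinitely often, since a run of $\cpa{A}$ passes a time-increasing transition $(q,[a_i,a_{i+1}\oir)\tto{d,w}_{\{m_z\}}(q,\oil a_i,a_{i+1}])$ iff positive time elapses in the matching concrete configuration.

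\emph{Direction (1).} Let $\rho$ be a non-Zeno, Büchi accepted, $c$-feasible run in $\sem{A}$. First I would normalize $\rho$ by merging consecutive delays, so that no two delays are adjacent. Since $x$ is bounded by $N$ (Lemma~\ref{le:clockbound}) and $\rho$ is non-Zeno, $\rho$ contains infinitely many resetting switches --- otherwise, after the last reset the clock would be non-decreasing and bounded, so only finite time could elapse --- so I cut $\rho$ into consecutive reset-free infixes $\rho=\sigma_0\sigma_1\sigma_2\cdots$. Inside each $\sigma_j$ the clock is non-decreasing, it starts at an integer constant (the previous reset value, hence a corner), and its finitely many switches are taken at clock values $u_1\le\cdots\le u_l$. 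The crux is to replace $\sigma_j$ by a corner-point infix $\hat\sigma_j$ of $\cpa{A}$ along the same locations and switches, obtained by moving the switch points $u_p$ to corner values in $\mathfrak{C}\cup\{N\}$; each time-passing portion of $\hat\sigma_j$ is then realized by the triple $\{a_i\}\to[a_i,a_{i+1}\oir\to\oil a_i,a_{i+1}]\to\{a_{i+1}\}$, which accumulates the total weight $r(q)(a_{i+1}-a_i)$. Concatenating, $\hat\rho=\hat\sigma_0\hat\sigma_1\cdots$ is a $c$-feasible run of $\cpa{A}$; it sees every colour of $\mcal{M}$ infinitely often because $\rho$ does and switches preserve colours, and it sees $m_z$ infinitely often because non-Zenoness transfers along the construction. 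So $\hat\rho$ is Büchi accepted.

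\emph{Direction (2).} Let $\hat\rho$ be a Büchi accepted $c$-feasible run in $\cpa{A}$ and fix $\epsilon>0$; put $R=1+\max_{q\in Q}|r(q)|$. Since $m_z$ lies in the colour set of $\cpa{A}$ and is seen infinitely often, $\hat\rho$ has infinitely many time-increasing transitions $\tau_1,\tau_2,\dots$; pick reals $\delta_k>0$ with $\sum_k\delta_k\le\epsilon/R$. I would build a concrete run $\rho$ in $\sem{A}$ following exactly the locations, switches and resets of $\hat\rho$: a weight-$0$ corner transition becomes a $0$-delay, and the $k$-th time-increasing transition $(q,[a_i,a_{i+1}\oir)\to(q,\oil a_i,a_{i+1}])$ becomes a delay of $a_{i+1}-a_i-\delta_k$. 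An induction shows that the concrete clock then always lies in the interval named by the current corner-point region and clears every strict guard or invariant with slack, which is what the $\epsilon$ pays for. The run $\rho$ is non-Zeno because its delays differ from the divergent abstract ones by at most $\delta_k$; it is Büchi accepted because it takes the same coloured switches as $\hat\rho$; and at every index its accumulated weight differs from that of $\hat\rho$ by at most $R\sum_k\delta_k\le\epsilon$, so starting with credit $c+\epsilon$ keeps every prefix of $\rho$ non-negative. A short case distinction handles the truncation at the weak upper bound $b$ --- it only lowers the accumulated weight of both runs, so the comparison survives. Hence $\rho$ is a non-Zeno, Büchi accepted, $(c+\epsilon)$-feasible run of $\sem{A}$.

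\emph{Main obstacle.} The delicate step is the corner-pushing in direction (1): making rigorous that the switch points $u_1,\dots,u_l$ of each reset-free infix can be moved simultaneously to corner clock values while keeping all prefix weights non-negative and all guards and invariants satisfied. Here the piecewise-linear dependence of the accumulated weight on $(u_1,\dots,u_l)$ and the integrality of the constants are essential; this is precisely the energy content of the corner-point abstraction of \cite[Prop.~8]{DBLP:conf/formats/BouyerFLMS08}, and the work remaining is to thread the Büchi colours and the Zeno-excluding colour $m_z$ through this construction and through its converse.
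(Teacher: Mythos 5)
Your overall strategy matches the paper's: corner-point pushing for part (1) and an $\epsilon$-perturbation of delays for part (2), with the colours of $\mcal{M}$ carried verbatim by switches and $m_z$ tracking time divergence. However, there are two concrete problems.

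For part (1), you name the corner-pushing step ``the main obstacle'' and defer it to the energy content of the cited proposition, so the argument is not actually closed. The paper's device here is short but essential: within each maximal segment of $\rho$ whose valuations lie in a single open region $\oil a_k, a_{k+1}\oir$ (where no guard or invariant can discriminate), it performs the \emph{entire} delay $a_{k+1}-a_k$ in the location of maximal rate among those visited, taking all other steps with zero delay. This particular corner choice does more than maximize the total segment weight: it makes the abstract accumulated energy dominate the concrete one at every intermediate position, which is what $(c,b)$-feasibility of the constructed run (a condition on all prefixes, not on segment totals) actually requires. Your appeal to piecewise linearity and integrality identifies the right vertex for the total weight but does not by itself address prefix feasibility.

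For part (2), the placement of your perturbations is wrong. You realize the $k$-th time-increasing transition as a delay of $a_{i+1}-a_i-\delta_k$ and every corner transition as a $0$-delay, so when the abstract run sits in the corner region $\{a_{i+1}\}$ your concrete clock is strictly below $a_{i+1}$. Then an exact guard $x=a_{i+1}$ --- which occurs already in the running example ($x=35$, $x=55$) --- cannot be taken, and your claimed invariant that the concrete clock always lies in the region named by the abstract state fails at corners; moreover, a $0$-delay out of $\{a_i\}$ clears no strict lower-bound constraint. The paper avoids both problems by splitting the slack: a small positive delay on $(q,\{a_i\})\to(q,[a_i,a_{i+1}\oir)$ to move strictly inside the region, a correspondingly shortened main delay, and a final positive delay on $(q,\oil a_i,a_{i+1}])\to(q,\{a_{i+1}\})$ that returns the clock to exactly $a_{i+1}$. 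Your scheme is repairable along these lines, but as written the constructed $\rho$ need not be a run of $\sem{A}$.
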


The so-called \emph{infimum energy condition}
\cite{DBLP:conf/formats/BouyerFLMS08} in the second part above,
replacing $c$ with $c+\epsilon$, is necessary in the presence of
\emph{strict} constraints $x<c$ or $x>c$ in $A$.  The proof maps runs
in $A$ to runs in $\cpa{A}$ by pushing delays to endpoints of
corner-point regions, ignoring strictness of constraints, and this has
to be repaired by introducing the infimum condition.

\begin{proof}
  To show the first part, let $\rho$ be a non-Zeno Büchi accepted
  $c$-feasible run in $\sem{A}$.  We follow
  \cite{DBLP:conf/formats/BouyerFLMS08} and convert $\rho$ to an
  infinite run $\bar{\rho}$ in $\cpa{A}$ by pushing delay transitions
  within a common corner-point region to the most profitable
  endpoints.  Let $k\in\{1,\dotsc, n\}$ and consider a maximal
  subsequence
  \begin{equation*}
    (q_1, v_1)\tto{w_1}\dotsm\tto{w_{m-1}}(q_m, v_m)
  \end{equation*}
  of $\rho$ for which $v_1,\dotsc, v_m\in \oil a_k, a_{k+1}\oir$.
  Then all switch transitions in this sequence are non-resetting.

  Let $(q'_1,\dotsc, q'_p)$ be the subsequence of $(q_1,\dotsc, q_m)$
  of first-unique elements, that is,
  $(q'_1,\dotsc, q'_p)=(q_{i_1},\dotsc, q_{i_p})$ is such that
  $q_{i_1} = q_1 =\dotsm= q_{i_2-1} \ne q_{i_2} =\dotsm= q_{i_3-1}$
  etc.  Let $j\in \{1,\dotsc, p\}$ be such that $r(q'_j)$ is maximal,
  then we construct the following finite run in $\cpa{A}$:
  \begin{equation*}
    (q'_1, [a_k, a_{k+1}\oir) \tto{0}\dotsm\tto{0} (q'_j, [a_k,
    a_{k+1}\oir)
    {}\tto{w} (q'_j, \oil a_k, a_{k+1}])
    \tto{0}\dotsm\tto{0} (q'_p, \oil a_k, a_{k+1}]),
  \end{equation*}
  with $w=r(q'_j)\, (a_{k+1}-a_k)$.  (This is possible as there are no
  guards or invariants in the interval $]a_k, a_{k+1}[$.)

  We have seen how to convert maximal finite non-resetting sub-runs of
  $\rho$ to runs in $\cpa{A}$, so all we have left to treat are
  resetting switch transitions $(q, v)\tto{0} (q', x\gets k)$; these
  are converted to transitions $(q, \cpr)\tto{0} (q', \{k\})$.  This
  finishes the construction of $\bar{\rho}$.

  Now $\bar{\rho}$ sees all the colors in $\mathcal{M}$ infinitely often because
  $\rho$ does.  Given that we have maximized energy gains when
  converting $\rho$ to $\bar{\rho}$ (by pushing delays to the most
  profitable location), $\bar{\rho}$ is also $c$-feasible.  Finally, $\rho$
  being non-Zeno implies that also the color $m_z$ is seen infinitely
  often in $\bar{\rho}$.

  For the other direction, let $K$ be the maximum absolute value of
  the weight-rates of $A$ and $\bar{\rho} = (q_0, \{0\}) t_1' t_2'\dotsm$ a
  Büchi accepted $c$-feasible run in $\cpa{A}$.  We
  iteratively construct an infinite run
  $\rho = (q_0, 0) t_1 t_2\dotsm$ in $\sem{A}$; we have to be careful
  with region boundaries because of potential strict constraints in $A$.

  Assume that $t_1\dotsm t_{n-1} (q, v)$ has been constructed.
  \begin{itemize}
  \item If $t_n'$ is a switch $(q, \cpr)\tto{0}_{M} (q', \cpr)$, we let
    $t_n=(q, v)\tto{0}_{M} (q', v)$.
  \item If $t_n'$ is a switch $(q, \cpr)\tto{0}_{M} (q', \{k\})$, we
    let $t_n=(q, v)\tto{0}_{M} (q', v[x\mapsto k])$.
  \item If $t_n'$ is a delay
    $(q, \{a_i\})\tto{0}_\emptyset (q, [a_i, a_{i+1}\oir)$, then
    $v(x)=a_i$ and we let $t_n=(q, v)\tto{w}_\emptyset (q, v+d)$
    with $d=\frac{\epsilon}{2^{n+1} K}$ and $w=r(q) d$.  (We must
    introduce a delay $\frac{\epsilon}{2^{n+1} K}$ here given that the
    constraint in $q$ may be strict; but we keep it sufficiently small
    so that in the end, the sum of all such new delays is bounded
    above.)
  \item If $t_n'$ is a delay
    $(q, [a_i, a_{i+1}\oir)\tto{w'}_{\{m_z\}} (q, \oil a_i,
    a_{i+1}])$, then by construction,
    $v(x) = a_i + \frac{\epsilon}{2^m K}$ for some $1\le m\le n$, and
    we let $t_n=(q, v)\tto{w}_\emptyset (q, v+d)$ with
    $d = a_{i+1} - a_{i} - \frac{\epsilon}{2^{n+1} K} -
    \frac{\epsilon}{2^m K}$ and $w=r(q) d$.
  \item If $t_n'$ is a delay
    $(q, \oil a_i, a_{i+1}])\tto{0}_\emptyset (q, a_{i+1})$, then
    by construction, $v(x) = a_{i+1} - \frac{\epsilon}{2^m K}$ for some
    $1\le m\le n$, and we let $t_n=(q, v)\tto{w}_\emptyset (q,
    v+d)$ with $d = \frac{\epsilon}{2^m K}$ and $w=r(q) d$.
  \end{itemize}

  If is clear that $\rho$ is Büchi accepted.  Given that $\bar{\rho}$ is
  $c+\epsilon$-feasible and the differences in delays between $\bar{\rho}$
  and $\rho$ are bounded by
  $\sum_{n=1}^\infty \frac{\epsilon}{2^n K} = \frac{\epsilon}{K}$, it
  is clear that $\rho$ is $c+\epsilon$-feasible.
\end{proof}

\begin{figure}[tbp]
  \centering
  \begin{tikzpicture}[state/.style={shape=rectangle, rounded
      corners, draw}, y=1.4cm, x=1.8cm]
    \node[state, initial left] (00) at (1.3,1) {$\{0\}$};
    \node[state] (01) at (2.3,1) {$[0, 35\oir$};
    \node[state] (02) at (3.7,1) {$\oil 0, 35]$};
    \node[state] (03) at (4.7,1) {$\{35\}$};
    \path (00) edge (01);
    \path (01) edge node[swap] {$-350$} node[anchor=center]
    {$\color{orange} \bullet$} (02);
    \path (02) edge (03);
    \node[state] (10) at (0,0) {$\{0\}$};
    \node[state] (11) at (1,0) {$[0, 35\oir$};
    \node[state] (12) at (2.1,0) {$\oil 0, 35]$};
    \node[state] (13) at (3,0) {$\{35\}$};
    \node[state] (14) at (3.9,0) {$[35, 55\oir$};
    \node[state] (15) at (5,0) {$\oil 35, 55]$};
    \node[state] (16) at (6,0) {$\{55\}$};
    \path (10) edge (11);
    \path (11) edge node {$1400$} node[anchor=center]
    {$\color{orange} \bullet$} (12);
    \path (12) edge (13);
    \path (13) edge (14);
    \path (14) edge node {$800$} node[anchor=center]
    {$\color{orange} \bullet$} (15);
    \path (15) edge (16);
    \path (03.south west) edge (10.north east);
    \path (16.north west) edge (00.south east);
  \end{tikzpicture}
  \caption{Corner-point abstraction of base module of Figure~\ref{fig:ex.1}.}
  \label{fig:ex.cpa}
\end{figure}
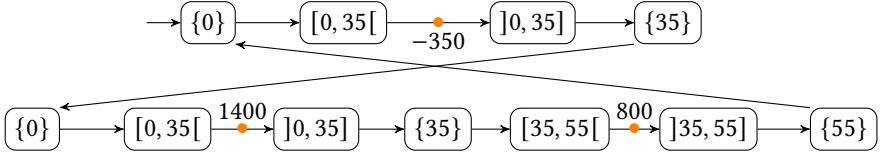

\begin{rexample}
We construct the corner-point abstraction of the base module $A$ of
Figure~\ref{fig:ex.1}.  Its constants are $\{0, 35, 55\}$, yielding the
following corner point regions:
\begin{equation*}
  \{0\}, \quad [0, 35\oir, \quad \oil 0, 35], \quad \{35\}, \quad
  [35, 55\oir, \quad \oil 35, 55], \quad \{55\}
\end{equation*}
The corner-point abstraction of $A$ now looks as in
Figure~\ref{fig:ex.cpa}, with the states corresponding to the ``shadow''
location in the top row; the colored transitions correspond to
the ones in which time elapses.
Note that this WBA is equivalent to the one in Figure~\ref{fig:ex.1.finite}.
\end{rexample}

Using the corner-point abstraction, we may now solve energy Büchi problems for one-clock WTBAs
by translating them into finite WBAs
and applying the algorithms of Section~\ref{sec:wba} and the forthcoming Section~\ref{se:impl}.
Note that as we only have one clock, the size of the corner-point abstraction is linear in the size of the input WTBA.

\section{Implementation}
\label{se:impl}
\label{sec:implementation}

We now describe our algorithm to solve energy Büchi problems for
finite WBA before detailing the changes necessary to treat Parity automata.
All of this has been implemented and is available at
\url{https://github.com/PhilippSchlehuberCaissier/wspot}.

We have seen in Section~\ref{sec:RedLasso} that this problem is
equivalent to the search for Büchi accepted $(c,b)$-feasible lassos.
By definition, a lasso $\rho=\gamma_1 \gamma_2^\omega$ consists of two parts, the
lasso prefix $\gamma_1$ (possibly empty, only traversed once) and the lasso
cycle $\gamma_2$ (repeated indefinitely).
In order for $\rho$ to be Büchi accepted and $(c,b)$-feasible the following
constraints need to hold:
\begin{itemize}
  \item the prefix $\gamma_1$ must be $(c,b)$-feasible;
  \item the cycle $\gamma_2$ must be $(\lweight{c}{\gamma_1},b)$-feasible;
  \item the cycle $\gamma_2$ must be $(\lweight{c}{\gamma_1\gamma_2^i},b)$-feasible for all $i > 0$.
\end{itemize}

The first constraint ensures that the prefix is energy feasible.
The second constraint ensures that we can take the cycle once after
traversing the prefix.
The third constraint expresses the need to loop in $\gamma_2$ indefinitely.
In fact, the energy $\lweight{c}{\gamma_1}$ may be greater than
$\lweight{c}{\gamma_1\gamma_2}$, however after sufficiently many
traversals of $\gamma_2$ the energy must stabilize (that is
$\lweight{c}{\gamma_1\gamma_2^m} = \lweight{c}{\gamma_1\gamma_2^{m+1}}$
for some sufficiently large $m$), while remaining $(c,b)$-feasible at all times.
As it turns out it is quite tricky to get this part correct and we will discuss
it in greater detail later on.

Finally the cycle $\gamma_2$ obviously needs to be Büchi accepted.

\begin{algorithm}[bt]
\caption{Algorithm to find Büchi accepted lassos in WBA\label{alg:second_part}}
\begin{algorithmic}[1]
\item[]\noindent \hskip-\leftmargin\textbf{Input:} weak upper bound $b$
\Function{BüchiEnergy}{graph $G$, initial credit $c$}
  \State $E \gets$ \Call{FindMaxE}{$G, G.\mathit{initial\_state}, c$}
    \Comment{$E\colon S\rightarrow\Nat$, mapping states to energy}
  \State $\mathit{SCCs} \gets$ \Call{Couvreur}{$G$} \Comment{Find all SCCs}
  \ForAll{$\mathit{scc} \in \mathit{SCCs}$}
    \State $\mathit{GS}, \mathit{back}\text{-}\mathit{edges} \gets degeneralize(\mathit{scc})$
    \ForAll{$be=\mathit{src}\xrightarrow{w} \mathit{dst} \in \mathit{back}\text{-}\mathit{edges}$}
        \State $E' \gets$ \Call{FindMaxE}{$\mathit{GS}, \mathit{dst},
          E[dst]$} \Comment{$be.dst$ is in $G$ and $GS$...}
        \label{algBE:firstIter}
        \State $e' \gets \min(b, E'[\mathit{src}]+w)$ \Comment{...(see Figure~\ref{fig:double_checking_degen})}
        \If{$E[\mathit{dst}] \le e'$} \Return True
        \Else
          \Comment{Second iteration (see Ex.~\ref{ex:double_checking})}
          \State $E'' \gets$ \Call{FindMaxE}{$\mathit{GS}, \mathit{dst}, e'$}
          \State $e'' \gets \min(b, E''[\mathit{src}]+w)$
          \If{$e' \le e''$} \Return True
          \Else
            \Comment{More iterations (see Ex.~\ref{ex:triple_check})}
            \ForAll{states $s_M$ with $E''[s_M]=b$}
              \label{alg:second_part:more_iterations}
              \State $E_\to \gets$ \Call{FindMaxE}{$\mathit{GS}, s_M, b$}
              \State $e_\text{dst} \gets \min(b, E_\to[\mathit{src}]+w)$
              \State $E_\leftarrow \gets$ \Call{FindMaxE}{$\mathit{GS}, dst, e_\text{dst}$}
              \If{$E_\leftarrow[s_M] = b$} \Return True
              \EndIf
            \EndFor
          \EndIf
        \EndIf
      \EndFor
    \EndFor
    \State \Return False
\EndFunction
\end{algorithmic}
\end{algorithm}

\subsection*{Finding lassos}

The overall procedure to find lassos is described in
Algorithm~\ref{alg:second_part}. It is based on two steps.  In step
one we compute all energy-optimal paths starting at the initial state
of the automaton with initial credit $c$.  This step is done on the
original WBA, and we do not take into account the colors. Optimal
paths found in this step will serve as lasso prefixes.

The second step is done individually for each Büchi accepting SCC.
The \textsc{Couvreur} algorithm, used to identify the SCCs, ignores the weights,
and we can use the version distributed by Spot.
We then degeneralize the accepting SCCs one by one,
as described in Section~\ref{sec:degen}; recall that
this creates one copy of the SCC, which we call a level, per
color. The first level roots the degeneralization in the original
automaton; transitions leading back from the last to the first level
are called back-edges.  These back-edges play a crucial role as they
are the only colored transitions in the degeneralized SCC and
represent the accepting transitions.

Hence any Büchi accepting cycle in the degeneralization needs to
contain at least one such back-edge, we can therefore focus our
attention on these.  We proceed to check for each back-edge whether we
can embed it in a $(c,b)$-feasible cycle within the degeneralized SCC.
This needs to be done with care and we might need multiple iterations
to ensure that no such cycle exists.
To this end, we start by computing the energy-optimal paths starting at the
destination of the current back-edge (by construction, a state in the
first level) with an initial credit corresponding to its maximal
prefix energy (as found in the first step).
This is done in line~\ref{algBE:firstIter}
and allows us to compute the maximal energy achievable in the destination of the
back-edge when imposing the back-edge as the last transition to be taken.

If this energy is greater than or equal to the prefix energy (in fact it can only be equal, as the prefix
energy is the maximal energy attainable for this state without any additional
constraints), then we can obviously traverse the same
path over and over and have therefore found a $(c,b)$-feasible accepting lasso.

However, the converse is not true. If the computed energy for the destination
of the back-edge is smaller than the prefix energy, we cannot conclude that
no energy feasible cycle embedding the back-edge exists.

\begin{example}
  \label{ex:double_checking}
  Consider the example shown in Figure~\ref{fig:double_checking_orig}.
  Here we have an automaton for which we have to compute
  maximal energy levels in the SCC twice (lines 10-13 in
  Algorithm~\ref{alg:second_part}):
  First we compute the prefix energy from state $0$,
  with $b=30$ and $c=0$. Then we are interested in the only back-edge, leading
  from $(2,2)$ to $(1,1)$ (the states in the degeneralized SCC).
  The state $(1,1)$, the destination of the back-edge, corresponds to state
  $1$ as it is in the first level of the degeneralization, which is rooted in
  the original graph.
  The prefix energy of state $1$ is $30$, while its optimal energy on the cycle,
  after taking the back-edge, is $20$.
  This means that despite it being part of a energy-positive loop the state
  has less energy than after the prefix.
  Hence we cannot conclude that we have found an accepting lasso after the
  first iteration, but need to run the algorithm once more from the state
  $(1,1)$ with a new initial credit of $20$ and $b=30$.
  In this iteration the state $(2,2)$ can once again reach an energy of $30$,
  causing the new energy of state $(1,1)$ to be $20$ once more.
  Now we can finally conclude that a feasible lasso indeed exists.
\end{example}

\begin{figure}[t]
  \begin{subfigure}[t]{0.4\textwidth}
  \centering
  \begin{tikzpicture}[state/.style={shape=rectangle, rounded
        corners, draw}, y=1.4cm, x=1.4cm, node distance=1.3cm and 1.3cm]
    \node[state, initial left] (0) {0};
    \node[state, right=of 0] (1) {1};
    \path (0) edge node {$30$} (1);
    \node[state, right=of 1] (2) {2};
    \path (1) edge[bend left] node {$0$} (2);
    \path (2) edge[bend left] node {$-10$} node[anchor=center]
        {$\color{orange} \bullet$} (1);
    \path (2) edge[loop right] node {$+1$} ();
    \path (2) edge[loop below] node {$-1$} node[anchor=center]
        {$\color{blue} \bullet$}();
  \end{tikzpicture}
  \subcaption{Original WBA}
  \label{fig:double_checking_orig}
  \end{subfigure}
  \hfill
  \begin{subfigure}[t]{0.58\textwidth}
  \centering
  \begin{tikzpicture}[state/.style={shape=rectangle, rounded
        corners, draw}, y=1.4cm, x=1.4cm, node distance=1.0cm and 1.3cm]
    \node[state, initial left] (0) {0};
    \node[state, right=of 0] (11) {1,1};
    \node[state, below=of 11] (12) {1,2};
    \path (0) edge node {$30$} (11);
    \node[state, right=of 11] (21) {2,1};
    \node[state, below=of 21] (22) {2,2};
    \path (11) edge[bend left=10] node {$0$} (21);
    \path (12) edge[bend right=10] node[below] {$0$} (22);
    \path (21) edge[bend left=10] node {$-10$} (11);
    \path (22) edge[bend left=10] node {$-10$} node[anchor=center]
        {$\color{red} \bullet$} (11);
    \path (21) edge[loop right] node {$+1$} ();
    \path (22) edge[loop right] node {$+1$} ();
    \path (22) edge[loop below] node {$-1$} ();
    \path (21) edge[bend left=10] node {$-1$} (22);
  \end{tikzpicture}
  \subcaption{Degeneralizing SCC $\{1, 2\}$ with level $1$ rooted in
    the original WBA. Back-edges colored red.}
  \label{fig:double_checking_degen}
  \end{subfigure}
  \caption{%
    Left: WBA (also used in Example~\ref{ex:double_checking});
    right: degeneralization of one SCC (states named
    \textit{original state}, \textit{level}).}
\end{figure}
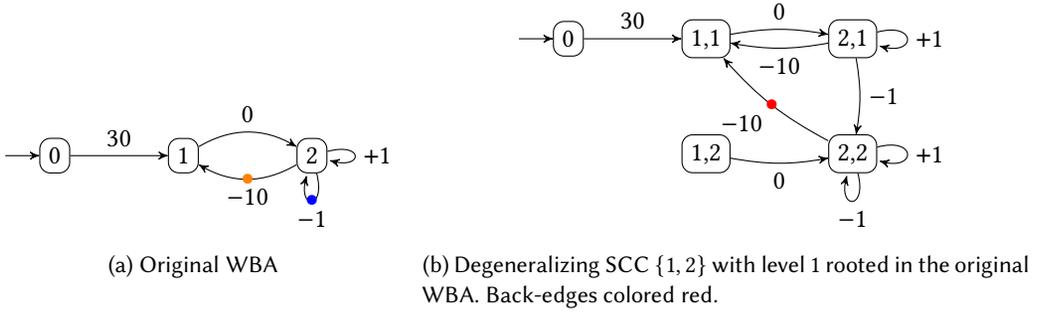

In some cases however, even two iterations do not suffice.
In fact the number of iterations necessary, when simply repeating the
application of the modified Bellman-Ford with an updated initial credit,
can be linear in the value of the weak upper bound $b$ (given enough states exist)
as shown in the next example

\begin{example}
  \label{ex:triple_check}
  Consider the WBA of Figure~\ref{fig:triple_check} which illustrates this problem
  and can be easily extended to necessitate $\mathcal{O}(b)$ iterations.

  The idea is as follows, illustrated for $b=5$:
  We have a single back-edge with a weight of $0$.
  Between the destination and source state of the back-edge (here state $1$ and $6$)
  we create $b-1$ states with positive self-loops (here states $2$ through $5$).
  The prefix for the destination of the back-edge allows it to reach the
  maximal energy of $b$, here the prefix is simply the transition from
  $0$ to $1$ with a weight of $5$.
  The only feasible cycle goes from state $1$ to state $b$ (here state $5$) to
  state $b+1$ (here state $6$), then finally takes the back-edge to complete the
  cycle. On this cycle, the energy attained by state $b+1$ is equal to $1$.

  However, to find this feasible cycle, we first need to discard the cycles
  $1 \to i \to 6 \to 1$ for all $i$ from $2$ to $b-1$.
  During the first iteration, the initial credit is equal to $b=5$.
  The ideal path to the source state of the back-edge passes by state $2$ allowing it
  to reach an energy of $b-1 = 4$. This new initial credit for state $1$ now
  however forbids to take the transition to state $2$.
  The updated optimal path to the source of the back-edge now passes through state $3$ and
  allows it to reach an energy of~$b-2 = 3$.

  This continues in a similar manner for all other states up to $b-1$ after which
  the actual feasible cycle is found.
  This example exposes a flaw in \cite[Algorithm~1]{DBLP:conf/fm/DziadekFS23}
  which only uses two iterations to find feasible cycles and thus would fail to find this one.
\end{example}

Note that the above scenario cycles are hidden because they are not energy optimal
but instead have a (low) constant exit energy,
\ie independently of the entrance energy, the exit energy will always be constant.
In the example above,
state 5 can attain energy $b$ and after applying the exit cost of $-4$,
we reach state 6 with an energy of $1$ independently of the entrance energy at state $1$
(if at least the entrance cost $1$ is available).
In any such scenario, the exit energy is only constant
because the weak upper bound $b$ is attained (or surpassed) along the way.

We would like to avoid having to run a number of iterations which is linear in the weak upper bound.
In order to do so, we propose the following strategy:
Instead of updating the initial credit, we can explicitly search
for cycles which embed a state attaining maximal energy as well as the back-edge.
The idea is to find this cycle by doing the following steps
for every state $s_M$ in the strongly connected component that attained maximal energy:
\begin{itemize}
  \item Run the modified Bellman-Ford starting in $s_M$ with initial credit $b$ to
  compute the maximal energy of the source of the back-edge (line 16).
  \item Propagate this energy along the back-edge to find the energy of the
  destination (line 17)
  \item Run the modified Bellman-Ford starting in the destination of the back-edge
  using the energy computed in the last step as initial credit (line 18).
\end{itemize}

If the energy computed for $s_M$ in the last step is once again $b$, that is, we can
return to this state with maximal energy, then we can conclude that we have found
a feasible cycle.
This optimization is interesting from a practical point of view as the number
of states with maximal energy is typically significantly smaller than $b$.
Moreover it allows to bound the number of iterations necessary by the number of states
rather than the value of $b$ which is necessary to have an overall complexity which
is independent of $b$.

Only if none of the nodes attaining maximal energy can be embedded in a
positive cycle containing the current back-edge we continue with the next
back-edge in the SCC or with the next SCC once all back-edges exhausted.
Finally, we can conclude that no $(c,b)$-feasible Büchi accepting lasso exists
once we have exhausted all (accepting) SCCs.

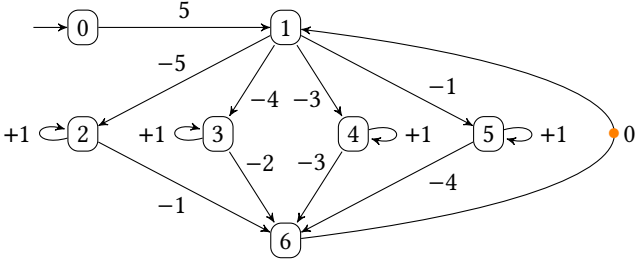
\begin{figure}[t]
  \begin{tikzpicture}[state/.style={shape=rectangle, rounded
        corners, draw}, y=1.4cm, x=1.8cm]
    \node[state, initial left] (0) at (0,0) {0};
    \node[state] (1) at (1.5,0) {1};
    \node[state] (2) at (0,-1) {2};
    \node[state] (3) at (1,-1) {3};
    \node[state] (4) at (2,-1) {4};
    \node[state] (5) at (3,-1) {5};
    \node[state] (6) at (1.5,-2) {6};
    \path (0) edge node {$5$} (1);
    \draw (6) .. controls (4.7,-1.5) and (4.7, -.5) .. node[swap] {$0$}
    node[anchor=center] {$\color{orange} \bullet$} (1);
    \path (2) edge[loop left] node {$+1$} ();
    \path (3) edge[loop left] node {$+1$} ();
    \path (4) edge[loop right] node {$+1$} ();
    \path (5) edge[loop right] node {$+1$} ();
    \path (1) edge[swap] node[xshift=1.5mm] {$-5$} (2);
    \path (1) edge node[xshift=-1.5mm] {$-4$} (3);
    \path (1) edge[swap] node[xshift=1.5mm] {$-3$} (4);
    \path (1) edge[near end] node[xshift=-1.5mm] {$-1$} (5);
    \path (2) edge[swap] node[xshift=1.5mm] {$-1$} (6);
    \path (3) edge[pos=.4] node[xshift=-1.5mm] {$-2$} (6);
    \path (4) edge[swap,pos=.4] node[xshift=1.5mm] {$-3$} (6);
    \path (5) edge[near start] node[xshift=-1.5mm] {$-4$} (6);
  \end{tikzpicture}
  \caption{WBA (of Example~\ref{ex:triple_check}) where two iterations do not suffice}
  \label{fig:triple_check}
\end{figure}

\subsection*{Computing the Optimal Energy}
\label{sec:comp_opt_en}

Our main Algorithm \ref{alg:second_part} allows us to decide the existence
of feasible lassos in the WBA. However one of the key components,
the function \textsc{FindMaxE} has not yet been detailed and we will do so
in this section, detailing how to efficiently find the optimal energy in
weighted graphs allowing positive loops.

The problem is similar (but inverse) to finding shortest paths in weighted
graphs.
This may be done using the well-known Bellman-Ford algorithm
\cite{bellman, ford}, which breaks with an error if it finds negative
loops.
In our inverted problem, we are seeking to maximize energy,
so positive loops are accepted and even desired.
To take into account this particularity, we modify the Bellman-Ford algorithm to
invert the weight handling and to be able to handle positive loops.
The modified Bellman-Ford algorithm is given in Algorithm~\ref{alg:modBF}.

The standard algorithm computes shortest paths by relaxing the distance
approximation until the solution is found.
One round (an iteration of the outer loop over the number of states)
considers all transitions to relax the respective destination node and
the algorithm makes as many rounds as there are nodes.
This ensures that the shortest distance is found as further improvements
can only be caused by negative loops.

Inverting the algorithm is easy: the relaxation is done if the new energy
is higher than the old one; additionally the new energy has to be non-negative
and is bounded from above by the weak upper bound.

The second modification to Bellman-Ford is the handling of positive loops.
This part is more involved, especially if one strives for an efficient algorithm.
We could run Bellman-Ford until it reaches a fixed point, however this can
significantly impact performance as shown in the following example.

\begin{example}
\label{ex:pumping}
Consider the automaton shown in Figure~\ref{fig:pumping} with $b$ being
a multiple of $N$ for simplicity.
Here, in order to attain the maximal energy on a state starting with $0$ initial credit,
$\frac{b}{N}$ iterations of the modified Bellman-Ford are necessary.
This is because the self-loop increases the energy of the state by one,
and the self-loop is considered $N$ times during one iteration.
The energy is however only improved for states which have already been discovered
and the next state (the state to the right of the current state)
can only be reached once the current state has attained $b$.
To reach a fixed point of the energy over the entire graph we need to
reach all states and then ensure that they all reach maximal energy,
which is only achieved after $N\frac{b}{N} = b$ iterations of the modified
Bellman-Ford.
\end{example}

Ideally we would like the upper bound to have no influence on the
runtime.
To this end we introduce the function \textsc{PumpAll},
which sets the energy level of all states on positive loops
detected by the last iteration of Bellman-Ford to the achievable maximum.
This way, instead of needing $\frac{b}{N}$ iterations of Bellman-Ford to
attain the maximal energy, we only need one plus a call to \textsc{PumpAll}.

Before continuing, we make the following observation.
This stage will be called from Algorithm~\ref{alg:second_part}
that recognizes loops necessary to fulfill the Büchi condition.
Here, we only need to check reachability.
Therefore, the only reason to form a loop is to
gain energy, implying that we are only interested in \emph{simple} energy
positive loops, \ie loops where every state appears at most once.
If we set the optimal reachable weight in simple loops, then nested
loops are updated by Bellman-Ford in the usual way afterwards.

To improve the runtime of our algorithm, we exploit that Bellman-Ford
can detect positive loops and handle these loops specifically.  Note
however that contrary to a statement
in~\cite{DBLP:conf/formats/BouyerFLMS08}, we cannot simply set all
energy levels on a positive loop to $b$: in the example of
Figure~\ref{fig:double_checking_orig}, starting in state 2 with an
initial credit of~$10$, the energy level in state 1 will increase with
every round of Bellman-Ford but never above $20=b-10$.

\begin{figure}[tbp]
  \centering
  \begin{tikzpicture}[state/.style={shape=rectangle, rounded
        corners, draw}, y=2cm, x=1.8cm]
    \node[state, initial left] (0) {0};
    \path (0) edge[loop above] node {$+1$} ();
    \node[state, right=of 0] (1) {1};
    \path (0) edge node {$-b$} (1);
    \path (1) edge[loop above] node {$+1$} ();
    \node[state, right=of 1] (2) {2};
    \path (1) edge node {$-b$} (2);
    \path (2) edge[loop above] node {$+1$} ();
    \node[state, right=of 2] (3) {3};
    \path (2) edge node {$-b$} (3);
    \path (3) edge[loop above] node {$+1$} ();
    \node[right=of 3] (4) {\large$\ldots$};
    \path (3) edge node {$-b$} (4);
    \node[state, right=of 4] (5) {$N-1$};
    \path (4) edge node {$-b$} (5);
    \path (5) edge[loop above] node {$+1$} node[anchor=center]
        {$\color{blue} \bullet$}();
  \end{tikzpicture}
  \caption{WBA for Example~\ref{ex:pumping}}
  \label{fig:pumping}
\end{figure}
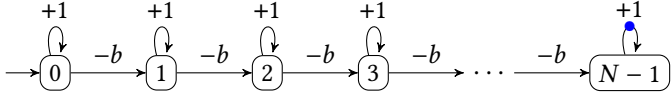

\begin{algorithm}[tbp]
\caption{Modified Bellman-Ford}\label{alg:modBF}
\begin{algorithmic}[1]
\item[]\noindent \hskip-\leftmargin\textbf{Shared Variables:} $E, P$
\Function{modBF}{weighted graph $G$}
  \For{$n \in \{1, \dots, |S|\}$}
    \ForAll{$t=s\xrightarrow{w} s' \in T$}
      \State $e' \gets \min(E(s)+w, b)$
      \If{$E[s'] < e'$ and $e'\geq 0$}
        \State $E[s'] \gets e'$
        \State $P[s'] \gets t$ \Comment{ $P\colon S\rightarrow T$, mapping states to best incoming transition}
        \label{eBF:l_opt_pred}
      \EndIf
    \EndFor
  \EndFor
\EndFunction
\Statex 

\Statex Helper function assigning optimal energy to all states on the
       energy positive loop containing~$s$
\Function{PumpLoop}{weighted graph $G$, state $s$}
  \ForAll{$s' \in$ \Call{Loop}{s}} \Comment{\Call{Loop}{} returns the states on the loop of $s$ ...}
    \State $E[s'] \gets -1$ \Comment{Special value to detect fixed point}
  \EndFor
  \State $E[P[s].src] \gets b$
  \While {$\top$} \Comment{Loops at most twice}
    \ForAll{$s' \in$ \Call{Loop}{s}} \Comment{... in forward order}
      \State $t \gets P[s']$
      \State $e' \gets \min(b, E[t.src]+t.w)$
      \If{$e' = E[t.dst]$}
        \State Mark loop (and suffix) as done
        \State \Return \Comment{fixed point reached}
      \EndIf
      \State $E[t.dst] \gets e'$
    \EndFor
  \EndWhile
\EndFunction
\Statex 

\Statex Helper function, pumping all energy positive loops induced by $P$
\Function{PumpAll}{weighted graph $G$}
  \ForAll{states $s$ that changed their weight}
    \State $t=P[s]$
    \If{$\min(b, E[t.src]+t.w) > E[s]$}
      \State $s' \gets s$ \Comment{$s$ can be either on the loop or in a suffix of one}
      \Repeat \Comment{Go through it backwards to find a state on the loop}
        \State $s'.mark \gets \top$
        \State $s' \gets t.src$
      \Until{$s'$ already marked}
      \State \Call{PumpLoop}{$G, s'$} \Comment{Pump it}
    \EndIf
  \EndFor
\EndFunction
\Statex 

\Statex Function computing the optimal energy for each state
\Function{FindMaxE}{graph $G$, start state $s_0$, initial credit $c$}
  \State Init($s_0$, $c$)
  \Comment{initialize values in $E$ to $-\infty$ and $E(s_0)=c$}
  \While{$not\ fixed point(E)$} \Comment{Iteratively search for loops, then pump them}
    \State \Call{modBF}{$G$}
    \State \Call{PumpAll}{$G$}
  \EndWhile
  \State \Return copyOf($E$)
\EndFunction
\end{algorithmic}
\end{algorithm}

In order to have an algorithm
whose complexity is independent of $b$, we instead compute the fixed
point from above.
We first make the following observation.

\begin{lemma}
  \label{lem:positive_loops_reach_b}
  On (strictly) energy positive loops,
  there exists at least one state on the loop
  that can attain the maximal energy $b$.
\end{lemma}

\begin{proof}
  Since the loop is energy positive,
  we can increase the energy level at any specific node
  by cycling through the loop.
  This can be repeated until a fixed point is reached.
  This fixed point is only reached when at one of the states
  the accumulated weight reaches $b$
  (or would surpass $b$ but is then cut down to $b$).
  As the increase of energy with every loop is a strictly monotone operation,
  a fixed point will be reached.
\end{proof}

If we knew the precise state that attains maximal energy,
we could set its energy to $b$ and follow
the loop once while propagating the energy, causing every
state on the loop to be set to its maximal achievable energy.
However, not knowing which state will effectively attain $b$, we start
with any state on the loop, set its energy to $b$ and propagate the energy
along the loop until a fixed point is reached.
This is the case after
traversing the loop at most twice.
This is done by the function \textsc{PumpLoop}.

\begin{lemma}
  \label{le:pumploop}
  \textup{\textsc{PumpLoop}} calculates the desired fixed point
  after at most two iterations through the loop.
\end{lemma}

\begin{proof}
  In Algorithm~\ref{alg:modBF}, lines 9 and 10 ensure
  that the fixed point check in line 16 does not detect false positives.
  After setting an arbitrary state's energy to $b$,
  the algorithm iterates through the states in the loop in forward order.

  Consider w.l.o.g.\ the positive loop
  $\gamma = s_1\tto{w_1} s_2\tto{w_2}\dotsm \tto{w_{N-1}} s_N$
  with $s_1 = s_N$.
  By Lemma~\ref{lem:positive_loops_reach_b} we know that there exists at least
  one state $s_j$ with $0\le j < N$ whose maximal energy equals $b$.
  Before the first energy propagating traversal of the loop we set the energy of
  $s_1$ to $b$.
  Two cases present themselves.  If $j=0$, then energy is correctly propagated and we
  reach a fixed point after one traversal.
  In the second case, the energy attainable by $s_1$ is strictly smaller than $b$.
  Propagating from this energy level will over-approximate the energies reached
  by the states $s_0$ through $s_{j-1}$ on the loop, but only until state $s_j$ is
  reached which actually attains $b$. As energy is bounded,
  the energy levels of state $s_j$ and its successors $s_{j+1},\dotsc, s_N$ are correctly calculated.
  This means that after traversing the loop
  $s_j\tto{w_j} \dotsm \tto{w_{N-1}} s_{N} \tto{w_1} s_2\tto{w_2}\dotsm \tto{w_{j-1}} s_j$,
  all energy levels on the loop are correctly calculated and this is
  guaranteed to happen before traversing the original loop twice.

  The corresponding fixed point condition is detected by line 16 which will stop the
  iteration.
  Note that we actually need to check for \emph{changes} in the energy level on line
  16, and not whether some state attained energy $b$,
  as we at this point cannot know whether this energy was reached due to
  over-approximation.
\end{proof}

Note that
the pseudocode shown here is a simplification, as our implementation contains
some further optimizations.  Namely, we implement an early exit in
\textsc{modBF} if we detect that a fixed point is reached, and we keep
track of states which have seen an update to their energy, as this
allows to perform certain operations selectively.

\subsection*{Algorithm complexity}

We are now able to conclude our discussion from Section~\ref{sec:wba}
and show that energy Büchi problems for finite WBA are decidable in polynomial time.

\begin{proof}[Proof of Theorem~\ref{th:wba}]
  For our decision procedure,
  the search for strongly connected components can be done in polynomial
  time.
  Our modified Bellman-Ford algorithm also has polynomial complexity.
  It is called once at the beginning of Algorithm~\ref{alg:second_part}
  and then for every back-edge of every strongly connected component,
  it is called several times.
  Its amount depends on the number of energy-maximal states
  (line~\ref{alg:second_part:more_iterations} of Algorithm~\ref{alg:second_part}).
  Given that the number of back-edges is bounded by the number of edges,
  and that the number of energy-maximal states is bounded by the number of states,
  we conclude that our overall algorithm has polynomial complexity.
\end{proof}

\section{Benchmarks}

We employ our running example to build a scalable benchmark case.
For modeling convenience we use
products of WTBAs
as introduced above extended with standard sender/receiver synchronization via channels.
The additional labels $s!$ and $s?$ are used for synchronization.
Edges with $s!$ can always be taken and emit the signal $s$;
edges with $s?$ can only be taken if a signal $s$ is currently emitted.
This modeling allows multiple work modules to start working at the same
time.

\begin{figure}[tbp]
\begin{minipage}[b]{0.4\linewidth}
  \centering
  \begin{tikzpicture}[x=.75cm]
  \node[state with output, initial left, rectangle split,
        rectangle split parts=2, rounded corners] (0) at (0,0) {$x\le
    35$ \nodepart{second} $-10$};
  \node[state with output, rectangle split,
        rectangle split parts=2, rounded corners] (1) at (4,0) {$x\le 55$
    \nodepart{second} $+40$};
  \path (0) edge[out=10, in=170] node[align=center] {$x=35$\\$x\gets 0,s!$} (1);
  \path (1) edge[out=-170, in=-10] node[align=center] {$x=55$\\$x\gets 0,s!$} (0);
  \path (0) edge[loop above] node[above] {$s!$} (0);
  \end{tikzpicture}
  \caption{Base circuit}
  \label{circ:bench_base}

  \vspace*{4ex}

  \begin{tikzpicture}[x=.75cm]
  \node[state with output, initial left, rectangle split,
        rectangle split parts=2, rounded corners] (0) at (0,0) { \nodepart{second} $0$};
  \node[state with output, rectangle split,
        rectangle split parts=2, rounded corners] (1) at (4,0) {$x\le i$
    \nodepart{second} $-10$};
  \path (0) edge[out=10, in=170] node[align=center] {$x\gets 0,s?$} (1);
  \path (1) edge[out=-170, in=-10] node[align=center] {$x=i$} (0);
  \end{tikzpicture}
  \captionof{figure}{Work module $\#i$}
  \label{circ:bench_wm}
\end{minipage}
\quad
\begin{minipage}[b]{0.55\linewidth}
  \captionsetup{type=table}
  \centering
  \begin{tabular}{ c | r | r | r}
    \#mod & \#states & to cpa [s] & sol [s] \\\hline
    1 & 25 & 0.01 & 0.00 \\\hline
    3 & 90 & 0.03 & 0.02 \\\hline
    5 & 293 & 0.06 & 0.24 \\\hline
    7 & 1012 & 0.19 & 3.24 \\\hline
    9 & 3759 & 0.89 & 59.52 \\\hline
    10 & 7377 & 1.87 & 261.38 \\\hline
    11 & 14582 & 4.37 & 1194.81 \\\hline
  \end{tabular}

  \vspace*{3ex}

  \caption{Benchmark results. From left to right:
  Number of work modules, Number of states in cpa,
  time needed to compute cpa, time needed to solve energy Büchi problem.
  Benchmarks done on an ASUS G14, Ryzen 4800H CPU with 16Gb RAM.}
  \label{tab:bench}
\end{minipage}
\end{figure}

As before, we use a base circuit with two states, see Figure~\ref{circ:bench_base}.
Work module $\#i$, see Figure~\ref{circ:bench_wm},
uses $10$ energy units while working
and spends exactly $i$ time units in the work state.
We then combine these models with the specification that time must pass and that
every work module is activated infinitely often.
All the presented instances are schedulable.
Table~\ref{tab:bench} presents the results of our benchmark, showing
that the presented approach scales fairly well.
We note that most of the time for solving the energy Büchi problem (last column)
is spent in our Python implementation of our modified Bellman-Ford algorithm.
In fact the total runtime is (at least for $\text{\#mod}\ge 5$)
directly proportional to the number of times lines 4 to 7 of \textsc{ModBF} in
Algorithm~\ref{alg:modBF} are executed.  Therefore, the implementation could greatly benefit from a
direct integration into Spot and using its
\texttt{C++} engine.

\section{Trace extraction}
\label{se:trace_ext}

Our main Algorithm \ref{alg:second_part} allows us to answer the
question if at least one accepting feasible lasso exists.
However, the algorithm does not provide the lasso itself.
In fact deducing the lasso, which is of great practical use, from the
intermediate results generated by Algorithm \ref{alg:second_part} is a nontrivial task in itself.
This corresponds to the energy Büchi trace problem and will be discussed in this section.

Consider our running example: affirming or refuting the existence
of a feasible schedule for all work modules is of a certain interest.
Extracting the actual trace which can then be used as a control
strategy is however significantly more interesting.

\begin{example}
  \label{ex:trace_difficult}
  Before going into the details of the algorithm, consider the WBA
  given in Figure~\ref{fig:ex.1.tr_ext_base} with $1$ being the initial state,
  the initial credit being set to $0$ and a weak upper bound equal to $100$.

  In this example, all transitions need to be taken infinitely often:
  the transition $2\rightarrow 1$ is needed to satisfy the acceptance condition,
  it can however only be taken if the maximal amount of energy was accumulated;
  the loop $2\rightarrow 4 \rightarrow 2$ is energy positive and allows
  state $2$ to attain maximal energy, however due to its entrance
  cost of $50$ it cannot be taken directly after arriving in $2$ from $1$;
  in order to be able to take the loop $2\rightarrow 4 \rightarrow 2$
  one has first to traverse sufficiently often $2\rightarrow 3 \rightarrow 2$,
  which has no entrance cost, but does not allow the energy in state $2$ to grow
  beyond $50$.

  Ideally we would like to find the (shortest) accepted cycle which in this case is
  $(1 \rightarrow (2 \rightarrow 3 \rightarrow 2)^{50} \rightarrow  (2 \rightarrow 4 \rightarrow 2)^{50} \rightarrow 1 )^\omega$.
  In this work we restrict ourselves to the easier task of finding an abstraction of the
  accepting cycle of the form
  $(1 \rightarrow (2 \rightarrow 3 \rightarrow 2)^+ \rightarrow  (2 \rightarrow 4 \rightarrow 2)^+ \rightarrow 1 )^\omega$
  where, by abuse of the usual notation, $(2 \rightarrow 3 \rightarrow 2)^+$ means that
  the loop $(2 \rightarrow 3 \rightarrow 2)$ is repeated until an energy fixed point
  is reached.

  So why is it difficult to retrieve this trace from the results of Algorithm \ref{alg:second_part}?
  The results are shown in Figure~\ref{fig:ex.1.tr_ext_res}:
  here states are labeled by the maximum energy for the state and the transitions
  leading to the energy optimal predecessor are shown in red.

  This induces multiply difficulties.
  First, the notion of optimal predecessor only allows us to find the
  (energy positive) loop $(2 \rightarrow 4 \rightarrow 2)$ but not
  $(2 \rightarrow 3 \rightarrow 2)$.
  Secondly, the transition $2\rightarrow 1$ which is necessarily taken as it
  corresponds to the back-edge, is not energy optimal.
  Therefore, even when disregarding energy feasibility, we cannot
  hope to find a feasible cycle embedding the back-edge by simply following
  the energy optimal predecessor.
\end{example}

In the rest of the section we will detail an efficient trace extraction
algorithm. We will show how to avoid a complete (re-)exploration
of the graph and why it is necessary to slightly modify the results of
Algorithm \ref{alg:second_part} to achieve this.

\begin{figure}[tbp]
  \begin{subfigure}[t]{0.52\textwidth}
    \centering
    \begin{tikzpicture}[%
        state/.style={shape=rectangle, rounded corners, draw,
          minimum width=.8cm},
        y=1.4cm, x=1.4cm, node distance=1.6cm and 1.6cm]
      \path[use as bounding box] (0,-.6) -- (3.5,1.5);
      \node[state, initial left] (1) at (0,0) {1};
      \node[state, right=of 1] (2) {2};
      \node[state, above=of 2] (3) {3};
      \node[state, right=of 2] (4) {4};
      \path (1) edge[bend left] node[swap] {$0$} (2);
      \path (2) edge[bend left] node {$-100$} node[anchor=center] {$\color{blue}\bullet$} (1);
      \path (2) edge[bend left] node {$+51$} (3);
      \path (3) edge[bend left] node {$-50$} (2);
      \path (2) edge[bend left] node[swap] {$-50$} (4);
      \path (4) edge[bend left] node {$+51$} (2);
    \end{tikzpicture}
    \subcaption{Nontrivial example of trace extraction (for $b=100$)}
    \label{fig:ex.1.tr_ext_base}
  \end{subfigure}
  \hfill
  \begin{subfigure}[t]{0.46\textwidth}
    \centering
    \begin{tikzpicture}[%
        state/.style={shape=rectangle, rounded corners, draw,
          minimum width=.8cm},
        y=1.4cm, x=1.4cm, node distance=1.6cm and 1.6cm]
      \path[use as bounding box] (0,-.6) -- (3.5,1.5);
      \node[state, initial left] (1) at (0,0) {0};
      \node[state, right=of 1] (2) {100};
      \node[state, above=of 2] (3) {100};
      \node[state, right=of 2] (4) {50};
      \path (1) edge[bend left] node[swap] {$0$} (2);
      \path (2) edge[bend left] node {$-100$} node[anchor=center] {$\color{blue}\bullet$} (1);
      \path (2) edge[bend left, draw=red] node {$+51$} (3);
      \path (3) edge[bend left] node {$-50$} (2);
      \path (2) edge[bend left, draw=red] node[swap] {$-50$} (4);
      \path (4) edge[bend left, draw=red] node {$+51$} (2);
    \end{tikzpicture}
    \subcaption{Results obtained from Algorithm \ref{alg:second_part}.}
    \label{fig:ex.1.tr_ext_res}
  \end{subfigure}
  \caption{Example~\ref{ex:trace_difficult}: automaton to show difficulties in trace extraction.}
\end{figure}
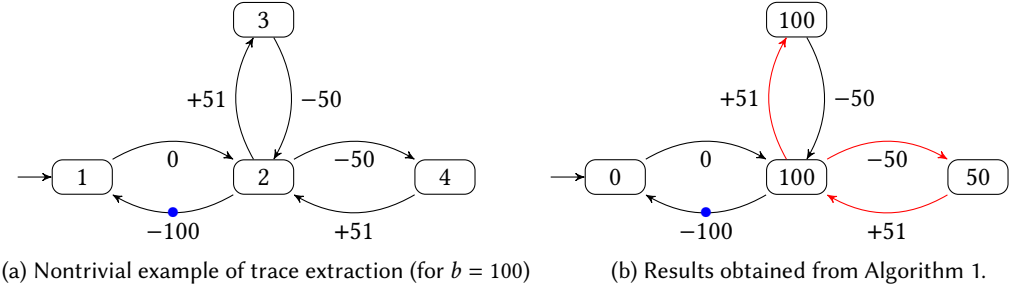

\subsection*{Adapting the solver output}

The standard extension of storing the energy optimal predecessor
(via the corresponding transition) in the Bellman-Ford algorithm,
which corresponds to line~\ref{eBF:l_opt_pred} in Algorithm \ref{alg:modBF}, is not enough.

The problem is that predecessors that have been energy optimal at
some point during the execution of Bellman-Ford are \textit{forgotten}
once better predecessor have been found.
However, these intermediate steps might be crucial to ensure energy feasibility
of the path.
Indeed, to avoid re-exploration of the entire graph we need to store not only
the last energy optimal predecessor, but all predecessors that have been energy
optimal at some point during the execution.  The so-modified procedure can be found
in Algorithm \ref{alg:modBF_trace}.

\begin{algorithm}[tbp]
  \caption{Modified Bellman-Ford All Predecessors}\label{alg:modBF_trace}
  \begin{algorithmic}[1]
  \item[]\noindent \hskip-\leftmargin\textbf{Shared Variables:} $E:Array[int], P:Array[List[int]]$
  \Function{modBF}{weighted graph $G$}
    \For{$n \in \{1, \dots, |S|\}$}
      \ForAll{$t=s\xrightarrow{w} s' \in T$}
        \State $e' \gets \min(E(s)+w, b)$
        \If{$E[s'] < e'$ and $e'\geq 0$}
          \State $E[s'] \gets e'$
          \If{$len(P[s']) < 2$ or $P[s'][-1] \neq t$ or $P[s'][-2] \neq t$}
            \State $P[s'].append(t)$ \Comment{$P\colon S\rightarrow List[T]$}
          \EndIf
        \EndIf
      \EndFor
    \EndFor
  \EndFunction
  \end{algorithmic}
\end{algorithm}

The implications and correctness of the optimization of not unconditionally
storing all predecessors (line 7) will be discussed later on.
For now, simply consider that all \textit{necessary} predecessors are stored.

\subsection*{The extraction algorithm}

With the preliminaries being established, we can detail the actual trace
extraction algorithm.
To better motivate the algorithm we will introduce a notoriously difficult
running example, highlighting most of the encountered problems.

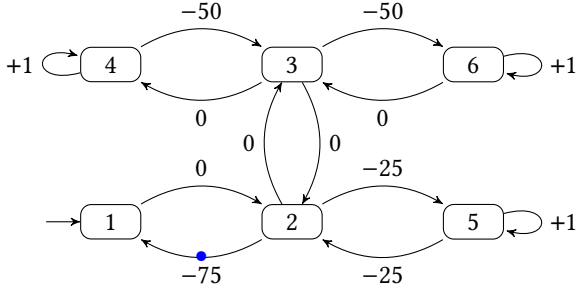
\begin{figure}[tbp]
  \begin{center}
    \begin{tikzpicture}[%
        state/.style={shape=rectangle, rounded corners, draw,
          minimum width=.8cm},
        y=1.4cm, x=1.4cm, node distance=1.6cm and 1.6cm]
       \node[state, initial left] (1) {1};
      \node[state, right=of 1] (2) {2};
      \node[state, above=of 2] (3) {3};
      \node[state, right=of 2] (5) {5};
      \node[state, left=of 3] (4) {4};
      \node[state, right=of 3] (6) {6};
      \path (1) edge[bend left] node {$0$} (2);
      \path (2) edge[bend left] node {$-75$}
      node[anchor=center] {$\color{blue}\bullet$} (1);
      \path (2) edge[bend left] node {$0$} (3);
      \path (3) edge[bend left] node {$0$} (2);
      \path (2) edge[bend left] node {$-25$} (5);
      \path (5) edge[bend left] node {$-25$} (2);
      \path (5) edge[loop right] node {$+1$} ();
      \path (3) edge[bend left] node {$0$} (4);
      \path (4) edge[bend left] node {$-50$} (3);
      \path (4) edge[loop left] node {$+1$} ();
      \path (3) edge[bend left] node {$-50$} (6);
      \path (6) edge[bend left] node {$0$} (3);
      \path (6) edge[loop right] node {$+1$} ();
    \end{tikzpicture}
  \end{center}
  \caption{Trace extraction example. An accepting cycle exists for $b=75$.}
  \label{fig:ex_rabbit}
\end{figure}

\begin{example}
  \label{ex:trace_predecessors_not_unique}
  In Figure~\ref{fig:ex_rabbit},
  every $+1$-self-loop allows to increase energy in that
  state up to the weak upper bound $b=75$.
  The entrance and exist cost (the weights on the incoming and outgoing
  transitions) of the states $4$, $5$ and $6$ put an implicit order
  on which states can be visited.
  State $4$ is the only maximal energy state that can be directly reached from
  state $1$ without any initial credit.
  As the transition from $4$ to $3$ has a cost of $50$, we can only attain an energy
  of $25$ in state $3$, prohibiting a direct passage to state $6$.
  To attain state $6$, which will allow us to eventually traverse the back-edge
  $2 \to 1$, we need to use the self-loop on state $5$ as an additional positive loop.

  Thus, the entrance costs limit the possibilities in which order the self-loop
  states may appear along the trace, \ie first in state $4$, then $5$ and finally $6$.
  Therefore all feasible cycles need to have the form
  \begin{equation*}
    1 \rightsquigarrow (4 \to 4)^{\geq 75} \to 3 \rightsquigarrow 2 \to  (5 \to 5)^{\geq 75} \to 2 \rightsquigarrow 3 \to  (6 \to 6)^{\geq 75} \rightsquigarrow 2 \to 1,
  \end{equation*}
  where $ \tau^{\geq 75} $ means that the loop $\tau$ needs to be
  taken at least $75$ times consecutively and $\rightsquigarrow$ stands for
  any (possibly looping) path that is at least energy-neutral.

  The path segments denoted with $\rightsquigarrow$ allow for instance to take
  the loop $(2 \to 3 \to 2)$ whenever desired, as it is energy neutral.
  Also, in this example it is always possibly to return to positive self-loops
  on states with a lower state number, before continuing.
  For instance, the path $1 \to 2 \to 3 \to (4 \to 4)^{75} \to 3 \to 2 \to (5 \to 5)^{25} \to 2 \to 3 \to (4 \to 4)^{75} \ldots $
  is perfectly feasible as it can be extended to an energy-feasible accepting loop.

  Since the set of feasible cycles is infinite, we cannot consider all of them.
  We need an effective and principled way to investigate them that is correct
  and complete.
  As the Büchi acceptance is ensured by including the back-edge in the cycle,
  the rest of the path's only concern is to ensure energy feasibility.
  This makes it possible to further restrain the structure of paths to be considered
  without losing correctness of the overall algorithm.
\end{example}

\begin{algorithm}[tbp]
  \caption{Trace extraction algorithm}\label{alg:trace_ext}
  \begin{algorithmic}[1]
  \item[]\noindent \hskip-\leftmargin
    \textbf{Shared Variables:}
    \begin{itemize}
    \item $E, E', E'', E_{\rightarrow}, E_{\leftarrow}:Array[int]$ Attainable Energies;
    \item $P, P', P'', P_{\rightarrow}, P_{\leftarrow}:Array[List[transition]]$ Extended predecessor list;
    \item $be: transition$ back-edge to be embedded;
    \item $s_M: int$ Maximal energy state to be embedded;
    \item $G: Graph$, $GS: Graph$ degeneralized SCC
    \end{itemize}
  \Function{TraceExtraction}{}
  \State $src, weight, dst \leftarrow be$
  \If{not $s_M$}
    \Comment{Alg.~\ref{alg:second_part} exited on line 9 or 13, we search directly for a cycle}
    \State $P_{cyc} \gets P''$ if $P''$ else $P'$ \Comment{$P''$ and $E''$ are only set after line 10 in Alg.~\ref{alg:second_part}}
    \State $E_{cyc} \gets E''$ if $E''$ else $E'$
    \State $cyc \gets \textsc{FindPath}(GS, P_{cyc}, dst, src, E_{cyc}[dst], E_{cyc}[src], weight)$
    \State $cyc \gets cyc \cdot be$ \Comment{Concatenate the path with the back-edge to form the cycle}
    \State $\textit{entry} \gets dst$
  \Else
    \Comment{Alg.~\ref{alg:second_part} exited on line 19, we search for a cycle with maximal energy state $s_M$}
    \State $cyc_\rightarrow \gets \textsc{FindPath}(GS, P_\rightarrow, s_M, src, b, E_\rightarrow[src])$
    \State $cyc_\leftarrow \gets \textsc{FindPath}(GS, P_\leftarrow, dst, s_M, min(E_\rightarrow[src] + weight, b), b)$
    \State $cyc \gets cyc_\leftarrow \cdot be \cdot cyc_\rightarrow$ \Comment{Concatenate to form the cycle}
    \State $\textit{entry} \gets s_M$
  \EndIf
  \State $\textit{pref} \gets \textsc{FindPath}(G, P, initialstate, \textit{entry}, c, E[\textit{entry}])$
  \State \Return $\textit{pref} \cdot cyc^+$
  \EndFunction
  \end{algorithmic}
\end{algorithm}

The goal of Algorithm \ref{alg:trace_ext} is to split up the search for lassos into sub-paths that
are easier to handle.
Each of these sub-paths can be found using the corresponding energies and optimal predecessors.
If it contains loops, their only purpose is to accumulate energy, otherwise they represent
the optimal path from a source to a destination.

The shared variables are set in Algorithm~\ref{alg:second_part} where
the extended predecessor lists $P$, $P'$, $P''$, $P_\to$ and $P_\leftarrow$ are
implicitly set at the same time
as their corresponding $E$, $E'$, $E''$, $E_\to$ and $E_\leftarrow$.

The if-part of the algorithm handles the case where we only need to
embed the back-edge in order to find the cycle.
To this end we search for a path from the destination of the back-edge
to the source which is such that the path can be closed to a feasible cycle
if the back-edge has a weight of $w$.
Since this algorithm is only called if the corresponding energy
Büchi problem was feasible, we are guaranteed that such a path exists.

The else-part handles the case where the maximal energy state $s_M$ and the
back-edge need to be embedded into the cycle.
To this end we first search for the optimal path from the maximal energy
state $s_M$ to the source of the back-edge, propagate the energy along
the back-edge and finally search for the optimal path from the destination
of the back-edge back to $s_M$.
As before, we are guaranteed that we can find such paths and that they
form a feasible cycle.

The last step is to search for the prefix of the lasso in a similar manner.
Thereby the search for a lasso is split into finding several easier sub-paths
with additional constraints on the initial credit and desired accumulated energy.
Before detailing the pseudocode of \textsc{FindPath}, let us shortly discuss
its difficulties.

The backward exploration of the extended predecessor list is at its core a combinatorial problem
over all the lists of possible predecessors with repeated elements.
Therefore, standard graph traversal techniques fail as it might be necessary
to traverse the same state multiple times.
For instance in Figure~\ref{fig:ex.1.tr_ext_base}, state $2$ appears in two
different energy positive loops which are both necessary.
In the following we establish several lemmas needed to break the search down to
an efficient algorithm and prove its correctness.

\begin{lemma}
  \label{lemma:no_nested_loops}
  Nested loops are not necessary for energy feasibility.
  That is, every path of the form $s \to (a \to \gamma_1 \to (\tau)^+ \to \gamma_2 \to a)^+ \to d$
  ensuring some maximal energy $m$ in $d$ when starting with initial credit $c$ in $s$ can be
  decomposed into either
  \begin{enumerate}
    \item $s \to a \to \gamma_1 \to (\tau)^+ \to \gamma_2 \to (a \to \gamma_1 \to \gamma_2 \to a)^+ \to d$ or
    \item $s \to a \to \gamma_1 \to (\tau)^+ \to \gamma_2 \to a \to d$
  \end{enumerate}
  with one of them achieving the same $m$ in $d$ when starting with $c$ in $s$.
\end{lemma}

\begin{figure}[tbp]
  \centering
  \begin{tikzpicture}[state/.style={shape=rectangle, rounded
        corners, draw}, y=2cm, x=1.8cm,foo/.style={->,decorate,
        decoration={snake,segment length=1.14mm,amplitude=0.4mm,
          pre length=4pt,post length=4pt,}}]
    \node[state, initial left] (0) {$\vphantom{d}s$};
    \node[state, right=of 0] (1) {$\vphantom{d}a$};
    \path (0) edge node {} (1);
    \node[state, right=of 1] (2) {$d$};
    \path (1) edge node {} (2);
    \node[state, above=of 1] (3) {$\vphantom{d}\hphantom{d}$};
    \path (3) edge[in=295,out=245,loop,foo] node[swap] {$\tau$} ();
    \path (1) edge[bend right=70,foo] node[swap] {$\gamma_1$} (3);
    \path (3) edge[bend right=70,foo] node[swap] {$\gamma_2$} (1);
  \end{tikzpicture}
  \caption{Path illustrating the case of nested loops. Squiggly arrows
  indicate paths instead of transitions.}
  \label{fig:nested_loop}
\end{figure}
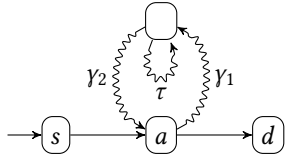

Note that the lemma is without loss of generality:
for multiple nested loops it suffices to reapply the decomposition.

\begin{proof}
  See Figure~\ref{fig:nested_loop} for an illustration.
  The intuition for the different decompositions is as follows:
  In the decomposition (2), the loop $\tau$ (together with the suffix
  $\gamma_2 \to a$) is the energy optimal predecessor for $d$.
  Therefore it is not necessary to take the ``outer'' loop
  $(a \to \gamma_1 \to \tau \to a)$ repeatedly, but $\gamma_1$ and $\gamma_2$ appear
  a single time on the path from $s$ to $d$.

  The decomposition (1) corresponds to the outer loop $a \to \gamma_1 \to \gamma_2 \to a$
  being the energy optimal predecessor for state $d$.
  In this case, $\tau$ may only appear in the prefix of this loop and it might
  be necessary to take it repeatedly to gather enough energy before being able
  to take the optimal loop.
  For instance, imagine $\gamma_1$ to be energy neutral
  and $\gamma_2$ to consist of the sub-paths $\gamma_2'$ and $\gamma_2''$
  ($\gamma_2 = \gamma_2' \to \gamma_2''$).
  If $\gamma_2'$ is significantly energy negative and $\gamma_2''$ is
  significantly energy positive, then it might be necessary to loop on
  $\tau$ in order to be able to complete the outer loop.
  Once we have gathered enough energy to traverse the outer loop once,
  we can always take it again, as it is energy positive overall.
\end{proof}

This allows us to impose an additional structure on the lassos to be considered:
they all have to be of the form
\begin{equation*}
  \gamma_{p,0} {\tau_{p,0}}^+ \gamma_{p,1} {\tau_{p,1}}^+ \cdots \gamma_{p,k} {\tau_{p,k}}^+ .
  (\gamma_{c,0} {\tau_{c,0}}^+ \gamma_{c,1} {\tau_{c,0}}^+ \cdots \gamma_{c,l} {\tau_{c,l}}^+)^\omega.
\end{equation*}
With $k \ge 0$ and $l>0$, all sub-paths denoted $\gamma$ correspond to loop-free segments, all sub-paths denoted with $\tau$ correspond to simple, energy positive, loops.
All sub-paths denoted by $\gamma$ may be empty,
but the loops $\tau$ may not be empty and must be taken at least once.
The subscript $p$ (respectively $c$) indicates that the sub-path belongs to
the prefix (respectively cycle) of the lasso.

Note that we are not interested in determining exactly how often the loops
$\tau$ have to be taken, but simply assume that they are repeated until
an energy fixed point is reached and have to be taken at least once, as denoted by $\tau^+$.

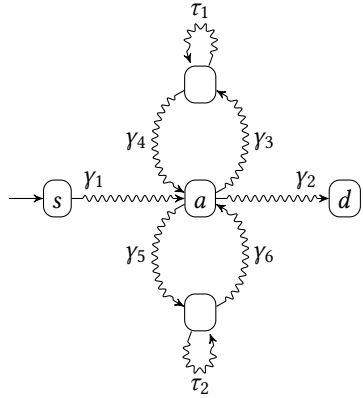
\begin{figure}[tbp]
  \centering
  \begin{tikzpicture}[state/.style={shape=rectangle, rounded
        corners, draw}, y=2cm, x=1.8cm,foo/.style={->,decorate,
        decoration={snake,segment length=1.14mm,amplitude=0.4mm,
          pre length=4pt,post length=4pt,}}]
    \node[state, initial left] (0) {$\vphantom{d}s$};
    \node[state, right=1.5cm of 0] (1) {$\vphantom{d}a$};
    \path (0) edge[foo] node[pos=0.2] {$\gamma_1$} (1);
    \node[state, right=1.5cm of 1] (2) {$d$};
    \path (1) edge[foo] node[pos=0.8] {$\gamma_2$} (2);
    \node[state, above=of 1] (3) {$\vphantom{d}\hphantom{d}$};
    \node[state, below=of 1] (4) {$\vphantom{d}\hphantom{d}$};
    \path (3) edge[in=115,out=65,loop,foo] node[swap] {$\tau_1$} ();
    \path (1) edge[bend right=70,foo] node[swap] {$\gamma_3$} (3);
    \path (3) edge[bend right=70,foo] node[swap] {$\gamma_4$} (1);
    \path (4) edge[in=295,out=245,loop,foo] node[swap] {$\tau_2$} ();
    \path (1) edge[bend right=70,foo] node[swap] {$\gamma_5$} (4);
    \path (4) edge[bend right=70,foo] node[swap] {$\gamma_6$} (1);
  \end{tikzpicture}
  \caption{Non-reappearance of positive loops.}
  \label{fig:reappearance}
\end{figure}

\begin{lemma}
  \label{lem:non_reapp_loop}
  Given two (strictly) energy positive loops $\tau_1$ and $\tau_2$, as shown in Figure~\ref{fig:reappearance},
  some initial credit $c$, an initial state $s$ and a
  weak upper bound $b$,
  let $m$ be the maximal energy attainable in $d$ by one of the following paths:
  \begin{enumerate}
    \item $s \to \gamma_1 \to a \to \gamma_2 \to d$,
    \item $s \to \gamma_1 \to a \to \gamma_3 \to {\tau_1}^+ \to \gamma_4 \to a \to \gamma_2 \to d$,
    \item $s \to \gamma_1 \to a \to \gamma_3 \to {\tau_1}^+ \to \gamma_4 \to a \to \gamma_5 \to {\tau_2}^+ \to \gamma_6 \to a \to \gamma_2 \to d$,
    \item $s \to \gamma_1 \to a \to \gamma_5 \to {\tau_2}^+ \to \gamma_6 \to a \to \gamma_2 \to d$,
    \item $s \to \gamma_1 \to a \to \gamma_5 \to {\tau_2}^+ \to \gamma_6 \to a \to \gamma_3 \to {\tau_1}^+ \to \gamma_4 \to a \to \gamma_2 \to d$.
  \end{enumerate}
  If $\tau_1$ and $\tau_2$ are the only (strictly) energy positive loops in the graph,
  then $m$ is the best energy attainable in $d$ when starting in $s$ with $c$
  among all paths that can be constructed using the sub-paths
  $\gamma_1,\dots, \gamma_6, \tau_1, \tau_2$
  (elements can be repeated).
\end{lemma}

Again, the lemma is without loss of generality:
if more than two (strictly) positive loops exist,
it suffices to reapply the same reasoning multiple times.

\begin{proof}
  From Lemma~\ref{lemma:no_nested_loops} we already know that we do not need to
  consider nested loops such as
  $(a \to \gamma_3 \to {\tau_1}^+ \to \gamma_4 \to \gamma_5 \to {\tau_2}^+ \to \gamma_6 \to a)$.
  This, together with the fact that following the (last / final) optimal
  predecessor leads either back to the initial state or a strictly positive
  loop leaves us with the 5 different cases mentioned above:

  Case (1): The direct path $s \to \gamma_1 \to a \to \gamma_2 \to d$ is optimal,
  passing by $\tau_1$ or $\tau_2$ would not increase the energy in $d$,
  therefore neither $\tau_1$ nor $\tau_2$ appear multiple times.

  Case (2): $\tau_1$ is the optimal predecessor of $d$ and the path
  $s \to \gamma_1 \to a \to \gamma_3 \to {\tau_1}^+ \to \gamma_4 \to a \to \gamma_2 \to d$
  is $(c,b)$-feasible. Passing by $\tau_2$ would not increase the energy
  in $d$ and neither would passing by $\tau_2$ and then $\tau_1$ (it would
  lead to the same energy),
  therefore neither $\tau_1$ nor $\tau_2$ appear multiple times.

  Case (3): As for case (1) $\tau_1$ is the optimal predecessor of $d$
  however we need to gather energy in $\tau_2$ to be able to take it for the
  first time. Once we have enough energy for $\tau_1$ there is no reason to
  return to $\tau_2$ as it is not optimal,
  therefore neither $\tau_1$ nor $\tau_2$ appear multiple times.

  Case (4) and (5) are the symmetric cases for case (2) and (3) only with
  the roles of $\tau_1$ and $\tau_2$ reversed.
\end{proof}

From this follows that we do not need to \textit{revisit} positive loops
along a path in order to obtain the optimal energy in the destination state.
We can therefore restrict the form of the feasible lassos even further, to
\begin{equation}
  \label{eq:path_segments}
  \gamma_{p,0} {\tau_{p,0}}^+ \gamma_{p,1} {\tau_{p,1}}^+ \cdots \gamma_{p,k} {\tau_{p,k}}^+ .
  (\gamma_{c,0} {\tau_{c,0}}^+ \gamma_{c,1} {\tau_{c,0}}^+ \cdots \gamma_{c,l} {\tau_{c,l}}^+)^\omega
\end{equation}
with $\tau_{p, i}\ne \tau_{p, j}$ and $\tau_{c, i}\ne \tau_{c, j}$ for $i\ne j$.

\subsection*{The \textsc{FindPath} algorithm}

The general idea of our \textsc{FindPath} algorithm is to follow the energy
optimal predecessors to find a path from the source to the destination state,
then evaluate whether it is feasible by forward propagating the initial credit
along the path and comparing it against the minimally desired energy at the destination.
However, as it was shown in the initial example, we cannot simply follow
the last predecessor and expect to find a feasible trace.
Nor can we examine all possible paths that can be constructed using the list
of all predecessors that have been optimal at some point, as due to loops there
can be infinitely many of these.
We therefore need to combine the idea of searching only for practical traces
with Lemmas \ref{lemma:no_nested_loops} and \ref{lem:non_reapp_loop}
and the following notion of chronological coherence
in order to construct an efficient algorithm
to find trace candidates.

We say that a run $\rho$ (possibly containing loops)
$\rho = s_0 \to s_1\to \cdots \to a \to (b \to \cdots \to c \to b)^+ \to \cdots$
is \emph{chronologically coherent} with the extended predecessors $P$ if two conditions hold.
First, for all states $s'$ that only appear once with $s \to s'$,
$s$ must be in $P[s']$.
Secondly, for all states $b$ that appear multiple times on $\rho$, we
first create a list $Pa[b]$ of \emph{actual predecessors}.
To this end we traverse $\rho$, and each time we encounter the state $b$
with some $x \to b$, we append $x$ to $Pa[b]$.
Now $\rho$ is chronologically coherent with $P$ if an index array
$Idx$ exists that is monotone and satisfies that for all $i < len(Pa[b])$,
$Pa[b]_i = P[b]_{Idx[i]}$.
Hence the extended predecessor list $P$ contains all predecessors of any $b$,
in the correct order, and consecutive repetitions may be reduced to a single occurence.

\begin{lemma}
  Given
  an initial state $s$ and $c,b\in\Nat$,
  let $E$ and $P$ be the attainable energies and the extended predecessor list
  resulting from the call to $\textup{\textsc{FindMaxE}}(G, s, c)$.
  Then for every state $s'$ there exists a run
  $s\to \cdots \to s'$ that attains the maximal achievable energy $E[s']$ and which
  is chronologically coherent with $P$.
\end{lemma}

\begin{proof}
  Since we traverse every loop on a run until an energy fixed point is reached,
  storing the same predecessor at most twice consecutively is sufficient.
  One entry ensures that a positive loop can be found, the second entry ensures
  that we can retake a loop partially in case the entry-point and the
  exit point of the loop do not coincide.

  Otherwise chronological coherence is ensured by construction as we store
  all predecessors.
\end{proof}

\begin{algorithm}[tbp]
  \caption{\textsc{FindPath} algorithm}\label{alg:path_ext}
  \begin{algorithmic}[1]
  \item[]\noindent \hskip-\leftmargin
    \textbf{Shared Variables:}
    \begin{itemize}
    \item $P:Array[List[transition]]$ Extended predecessor list;
    \item $gSrc: int$ and $gDst: int$ Initial and final state of the trace;
    \item $cSrc: int$ Initial energy at $gSrc$;
    \item $eDst: int$ Minimal energy to attain at $gDst$;
    \item $ext: int\cup None$ Extra cost for loop completion; $None$ if simple path
    \end{itemize}
  \Function{FindPath}{$G$, $P$, $gSrc$, $gDst$, $cSrc$, $eDst$, $ext=None$}
    \LComment{Search for a trace starting in $gSrc$ with $cSrc$ energy to $gDst$ with at least $eDst$ energy}
    \LComment{$ext$ serves as an indicator if we search for an implicit cycle.}
    \State $ci \gets$ [$\textsc{len}(P[s])$ for $s$ in $range(G.numStates())$]
    \State \Return $\Call{BackwardsSearch}{ci, []}$ \Comment{Returned list is empty if no traces exists}
  \EndFunction
  \Statex 
  \Function{BackwardsSearch}{$ci:Array[int]$ current index, $p: List[transition]$ current trace}
    \State $vCurr \gets p.\textsc{front}().src$ if not $p.\textsc{empty}()$ else $gDst$
    \If{$vCurr = gSrc$} \Comment{We found the target}
      \If{$\Call{ForwardExp}{p}$} \Return $p$ \Comment{A feasible trace was found}
      \EndIf
    \EndIf
    \ForAll{$i$ from $ci[vCurr]-1$ to $0$}
      \Comment{The index array $ci$ ensures chronological coherence}
      \State $ci' \gets ci.\textsc{copy}()$
      \State $p' \gets p.\textsc{copy}()$
      \State $ci'[vCurr] \gets i$
      \State $p'.\textsc{pushFront}(P[vCurr][i])$ \Comment{Add the transition to the start of the trace}
      \State $tr \gets \Call{BackwardsSearch}{ci', p'}$
      \If{$tr \neq []$} \Return $p'$
      \EndIf
    \EndFor
    \State \Return $[]$ \Comment{All options exhausted; No feasible trace.}
  \EndFunction
  \Statex 
  \Function{ForwardExp}{$p$ candidate trace}
    \LComment{The extra cost $ext$ is set to None if we do not search for a cycle}
    \State $tc \gets \Call{CompressTrace}{p}$ \Comment{Decompose trace into loops and prefixes}
    \State $e \gets cSrc$ \Comment{Initial energy is defaulted to initial credit}
    \State $eTarget \gets eDst$
    \ForAll{$\_$ in $range(1+(ext \text{ is } None))$} \Comment{Loop twice if $ext$ is given}
      \ForAll{$\textit{pref}, cyc$ in $tc$}
        \State $succ, e \gets \Call{propAlong}{e, \textit{pref}}$ \Comment{Propagate energy along prefix}
        \If{not $succ$} \Return False \Comment{Prefix was not energy feasible}
        \EndIf
        \State $succ, e \gets \Call{tryPumpLoop}{e, cyc}$
        \Comment{Detects whether loop is energy positive \emph{and} feasible}
        \If{not $succ$} \Return False \Comment{Loop was not energy feasible or not energy positive}
        \EndIf
      \EndFor
      \If{$e \ge eTarget$} \Return True \Comment{Check whether enough energy was accumulated}
      \EndIf
      \State $e \gets min(e+ext, b)$
      \Comment{Close the implicit cycle}
      \State $eTarget \gets e$
    \EndFor
    \State \Return False
  \EndFunction
\end{algorithmic}
\end{algorithm}

\subsection*{Ensuring feasibility}

Due to the asymmetry of the weak upper bound we cannot simply
compute the correct energy levels via back propagation, even when following
predecessors that have been energy optimal at some point.
We need to use an alternation between backward search and forward exploration.
We first construct a path, which might contain loops, from the destination state
to the source state by following one of the optimal predecessors at each step.
Once such a path is found, we need to check its energy feasibility
using forward exploration (along the path).

The forward exploration
can be done in linear time
in the size of the WBA.
The backward search is more complicated and is only guaranteed to terminate
thanks to the lemmas above.
The $\textsc{FindPath}$ function in Algorithm~\ref{alg:path_ext} starts a backward search over all
chronologically coherent traces.
It switches to a forward exploration once a candidate is found, in order to
evaluate its energy feasibility.
In Algorithm~\ref{alg:path_ext}, the parameters given to $\textsc{FindPath}$
are afterwards shared between the called functions.

\begin{algorithm}[tbp]
  \caption{\textsc{FindPath} algorithm -- Helper functions}
  \label{alg:trace_ext_help}
  \begin{algorithmic}[1]
  \item[]\noindent \hskip-\leftmargin
    \textbf{Shared Variables:}
    \begin{itemize}
    \item $P:Array[List[transition]]$ Extended predecessor list;
    \item $be: transition$ Back-edge of interest;
    \item $c: int$ Initial energy at $be.src$;
    \item $G:Graph$
    \end{itemize}
  \Statex Transforms a trace given as a list of transitions into a list of path segments
  \Function{TraceCompression}{$p: List[transition]$ the trace to compress}
    \State $idx \gets 0$
    \State $tc \gets []$ \Comment{List containing the path segments}
    \While{$idx < \textsc{len}(p)$} \Comment{Make sure to treat the entire list}
    \State $subtrace \gets []$
    \State $srcIdx \gets Dict()$ \Comment{Look-up mapping src vertex to index in subtrace}
    \While{$idx < \textsc{len}(p)$} \Comment{Search for the next path segment}
      \State $subtrace.\textsc{pushBack}(p[idx])$ \Comment{Add the transition to the end of the trace}
      \State $idx \gets idx + 1$
      \If{$subtrace.\textsc{back}().dst$ in $srcIdx.keys()$} \Comment{We found a loop}
        \State $cutIdx \gets srcIdx[subTrace.\textsc{back}().dst]$
        \State $tc.\textsc{pushBack}(\textsc{pathSegment}(subtrace[:cutIdx], subtrace[cutIdx:]))$
        \State \Call{break}{} \Comment{Advance to next segment}
      \EndIf
      \State $srcIdx[subtrace.\textsc{back}().src] \gets \textsc{len}(subtrace) - 1$
    \EndWhile
    \EndWhile
    \If{$subtrace$} \Comment{add remaining subtraces to $tc$}
      \State $tc.\textsc{pushBack}(\textsc{pathSegment}(subtrace, []))$
    \EndIf
    \State \Return $tc$
  \EndFunction
  \Statex 
  \Statex Propagate Energy along a path
  \Function{PropAlong}{$e:int$ the energy before, $p:List[transitions]$: the path}
    \ForAll{$(s, weight, t)$ in $p$}
      \State $e' \gets min(b, e+weight)$
      \If{$e' < 0'$} \Return False, $0$
      \EndIf
      \State $e \gets e'$
    \EndFor
    \State \Return True, $e$
  \EndFunction
  \Statex 
  \Statex Pump a possibly positive loop
  \Function{tryPumpLoop}{$e:int$ the energy before, $p:List[transitions]$: a cycle}
    \If{not $p$}
      \State \Return True, $e$
    \EndIf
    \State $eInit \gets e$
    \State $succ, e \gets \Call{PropAlong}{e, p}$ \Comment{Take the loop once to determine if positive and necessary}
    \If{(not $succ$) or $e \le eInit$} \Return False, $0$
    \EndIf
    \State $e \gets b$ \Comment{Set energy to the upper bound and correct via propagation}
    \State $\_, e \gets \Call{PropAlong}{e, p}$
    \State $\_, e \gets \Call{PropAlong}{e, p}$
    \State \Return True, $e$
  \EndFunction
  \end{algorithmic}
\end{algorithm}

Algorithm \ref{alg:trace_ext_help} shows some helper functions
in order to do the forward exploration efficiently.
\textsc{TraceCompression} turns a list of transitions into a list of path segments
by collecting and collapsing sequences of concatenable transitions.
A path segment is a pair of a linear trace and a loop part.
A list of path segments build a trace in the sense of Equation~\eqref{eq:path_segments}.
\textsc{PropAlong} is a simple procedure which propagates energy along a path
(or returns \textit{False} if energy drops below $0$).
\textsc{tryPumpLoop} tries to increase energy along a loop:
it returns \textit{False} if the loop is infeasible or energy non-positive;
otherwise it computes the maximal energy fixed point from above.

\begin{example}
  We continue the preceding Example~\ref{ex:trace_predecessors_not_unique} with Figure~\ref{fig:ex_rabbit}.
  Invoking \textsc{FindMaxE} on it with initial credit $0$ and a weak upper
  bound $75$ will return 
  $[1,3,5,3]$ as extended predecessor list of state $2$ (the most interesting state
  in this example).

  This predecessor list is chronologically coherent with the trace
  \begin{align*}
    \tau = \big(1 &\to 2\to 3\to (4\to 4)^{75}\\
                  &\to 3 \to 2 \to (5 \to 5)^{75}\\
                  &\to 2 \to 3 \to (6 \to 6)^{75} \to 3 \to 2 \to 1\big)^+
  \end{align*}

  The trace found by our algorithm is 
  \begin{align*}
    \tau = \big(1 &\to 2\to 3\to (4\to 4)^+\\
                  &\to 3 \to 2 \to (5 \to 5)^+\\
                  &\to 2 \to 3 \to (6 \to 6)^+ \to 3 \to 2 \to 1\big)^+
  \end{align*}
  which is the expected result given that we do not seek to identify
  how often loops have to be taken but always assume retaking
  them until an energy fixed point is reached.

  Note that the extended predecessor list of a state can depend on the number of states
  in the automaton. For instance, if Figure~\ref{fig:ex_rabbit} would only show a part of the automaton
  and there were $50$ additional states, then the extended predescessor list of $2$ would become
  $[1,3,3,5,5,3,3]$. This is due to the additional iterations in the modified Bellman-Ford algorithm,
  allowing the states $3$ and $5$ to reach higher energy levels which will in turn propagate to state $2$.
\end{example}

\section{Parity Condition}
\label{sec:parity_cond}

In this final section we show how to adapt our solution to Parity automata.
Let $(\mathcal{M}, S, s_0, T)$ be a WBA,
but now
$p: \mathcal{M}\to \Nat$ is a function assigning non-negative integers
(\ie~priorities) to colors.
An infinite run $\rho = s_1\to_{M_1} s_2\to_{M_2}\dotsm$ is
\emph{Parity accepted} if the maximal priority seen infinitely
often along $\rho$ is even, that is, if
$\max\{p(m) \mid m \in \text{Inf}((M_i)_{i\geq 1})\}$ is even.
As the number of colors is finite, the maximum always exists.

An example with priorities up to 4 is shown symbolically in Figure~\ref{fig:parity}
with accepting regions represented as white zones, whereas rejecting ones
are colored gray.
A run $\rho$ is accepted if either priority 4 occurs infinitely often, or
priority 3 occurs finitely often and priority 2 occurs infinitely often, or
priority 1 occurs finitely often and priority 0 occurs infinitely often.

\begin{figure}[t]
  \centering
  \begin{tikzpicture}
    \node[right,draw,ellipse,minimum height=8em,minimum width=22em] (4) {};
    \node[right,draw,ellipse,minimum height=7em,minimum width=18em,fill=gray!50] (3) {};
    \node[right,draw,ellipse,minimum height=6em,minimum width=14em,fill=white] (2) {};
    \node[right,draw,ellipse,minimum height=5em,minimum width=10em,fill=gray!50] (1) {};
    \node[right,draw,ellipse,minimum height=3em,minimum width=6em,fill=white] (0) {Inf 0};
    \path (0.east) node[right] {Fin 1};
    \path (1.east) node[right] {Inf 2};
    \path (2.east) node[right] {Fin 3};
    \path (3.east) node[right] {Inf 4};
  \end{tikzpicture}
  \caption{Parity condition for priorities up to 4:\\
  $\text{Inf}(4) \mid (\text{Fin}(3) \,\&\, (\text{Inf}(2) \mid (\text{Fin}(1) \,\&\, \text{Inf}(0))))$}
  \label{fig:parity}
\end{figure}
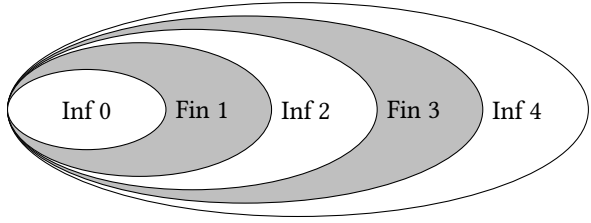

In order to solve energy Parity problems,
we transform them to successive energy Büchi problems.
We calculate prefix energies on the original automaton, as we did in the Büchi case.
For every SCC found by Couvreur's algorithm that contains at least one
transition of even priority (in SCCs containing only odd priorities, all cycles are rejected), we do the following:
First we create a new automaton that is a copy of the current SCC.
Then we give it to an algorithm performing the following steps.
\begin{itemize}
  \item If the automaton is empty, \ie it has no transitions, answer negatively.
  \item If the highest priority is even:
  \begin{itemize}
    \item create a copy of the current automaton;
    \item set the acceptance condition to Büchi;
    \item recolor the automaton the following way: all transitions with highest priority become Büchi accepted, all others are uncolored;
    \item solve the energy Büchi problem for this automaton.
    If the result is positive, we answer positively as well. If not, then we remove all
    transition with the highest priority from the automaton and perform a recursive call.
  \end{itemize}
  \item If the highest priority is odd:
  \begin{itemize}
    \item remove all transitions with the highest priority from the automaton;
    \item perform a recursive call.
  \end{itemize}
\end{itemize}

This shows that the energy Parity problem can be reduced into several
energy Büchi problems.
Moreover, this allows us to use the same algorithm for trace extraction in
the parity case.

\section{Conclusion}

We have shown how to efficiently solve energy Büchi problems, both in
finite weighted (transition-based generalized) Büchi automata and in
one-clock weighted timed Büchi automata, as well as how to efficiently extract
the actual trace from the intermediate results.
We have also extended our results to Parity conditions.

We have implemented all our algorithms in a tool based on TChecker and Spot.  Solving the latter problem is done by using the corner-point abstraction to
translate the weighted timed Büchi automaton to a finite weighted
Büchi automaton; the former problem is handled by interleaving a
modified version of the Bellman-Ford algorithm with Couvreur's
algorithm.

Our tool is able to handle some interesting examples, but the
restriction to one-clock weighted timed Büchi automata without weights
on edges does impose some constraints on modeling.
We believe that trying to lift the one-clock restriction is unrealistic;
but weighted edges (without Büchi conditions) have been treated in
\cite{DBLP:conf/hybrid/BouyerFLM10}, and we suspect that their approach
should also be feasible here.
(See \cite{DBLP:journals/lmcs/CacheraFL19} for related work.)

As a last remark,
it is known that multiple clocks, multiple weight dimensions, and even
turning the weak upper bound into a strict one which may not be
exceeded, rapidly leads to undecidability results, see
\cite{DBLP:conf/formats/BouyerFLMS08, DBLP:journals/pe/BouyerLM14,
  DBLP:conf/ictac/FahrenbergJLS11, DBLP:conf/lata/Quaas11}, and we are
wondering whether some of these may be sharpened when using Büchi
or Parity conditions.

\begin{acks}
  We are grateful to Rania Saadi for her help
  in implementing and testing our trace extraction algorithm
  and the fruitful discussions during her internship.
  The first author
  Sven Dziadek is supported by the \grantsponsor{ANR}{ANR}{https://anr.fr/} project EQUUS,
  grant number \grantnum{ANR}{ANR-19-CE48-0019}, and
  funded by the \grantsponsor{DFG}{Deutsche Forschungsgemeinschaft}{https://www.dfg.de/}
  (DFG, German Research Foundation), grant number \grantnum{DFG}{431183758}.
\end{acks}

\bibliographystyle{plain}
\bibliography{mybib}

\end{document}